\newcommand{\ignore}[1]{}
\definecolor{cb-salmon-pink}{RGB}{255, 182, 119}
\crefname{enumi}{Step}{Steps}
\newcommand{\Tr}{\mathrm{Tr}}
\title{Few Single-Qubit Measurements Suffice\\to Certify Any Quantum State}
\author{Meghal Gupta\thanks{UC Berkeley, \texttt{meghal@berkeley.edu}}\and William He\thanks{Carnegie Mellon University, \texttt{wrhe@cs.cmu.edu}}\and Ryan O'Donnell\thanks{Carnegie Mellon University, \texttt{odonnell@cs.cmu.edu}. Supported in part by a grant from Google Quantum AI.}}
\begin{document}
\allowdisplaybreaks
\maketitle
\begin{abstract}
   
    A fundamental task in quantum information science is \emph{state certification}: testing whether a lab-prepared $n$-qubit state is close to a given hypothesis state. In this work, we show that \emph{every} pure hypothesis state can be certified using only $O(n^2)$ single-qubit measurements applied to $O(n)$ copies of the lab state. Prior to our work, it was not known whether even subexponentially many single-qubit measurements could suffice to certify arbitrary states.
    This resolves the main open question of Huang, Preskill, and Soleimanifar (FOCS 2024, QIP 2024).

    Our algorithm also showcases the power of \emph{adaptive measurements}: within each copy of the lab state,  previous measurement outcomes dictate how subsequent qubit measurements are made. We show that the adaptivity is necessary, by proving an exponential lower bound on the number of copies needed for any nonadaptive single-qubit measurement algorithm.  
\end{abstract}
\DeclarePairedDelimiterX\diverg[2]{(}{)}{#1 \mathrel{}\mathclose{}\delimsize{:}\mathopen{}\mathrel{} #2}
\newcommand{\Del}[3]{\Delta_{#1}\diverg{#2}{#3}}
\newcommand{\EFid}[3]{\mathrm{EFid}_{#1}\diverg{#2}{#3}}

\newcommand{\ketbb}{\ket{\smash{b^\bot}}}
\newcommand{\tpO}{\smash{\widetilde{\smash{\psi}}^0}}
\newcommand{\ktpO}
{\ket{\tpO}}
\newcommand{\tpI}{\smash{\widetilde{\smash{\psi}}^1}}
\newcommand{\ktpI}
{\ket{\tpI}}
\newcommand{\tfO}{\smash{\widetilde{\phi}^0}}
\newcommand{\ktfO}
{\ket{\tfO}}
\newcommand{\tfI}{\smash{\widetilde{\phi}^1}}
\newcommand{\ktfI}
{\ket{\tfI}}
\newcommand{\psip}{\smash{\psi'}}
\newcommand{\phip}{\smash{\phi'}}
\newcommand{\tarp}{\smash{\overline{\tar}}}
\newcommand{\labp}{\smash{\overline{\lab}}}
\newcommand{\tarz}{\smash{\tar^{0}}}
\newcommand{\taro}{\smash{\tar^{1}}}
\newcommand{\labz}{\smash{\lab^{0}}}
\newcommand{\labo}{\smash{\lab^{1}}}

\section{Introduction}
A fundamental task in quantum information science is to test whether an unknown $n$-qubit state $\rho_{\lab}$ prepared in the lab is equal (or close) to a known target state~$\ket{\tar}$. This task is known as \emph{quantum state certification}. It is essential for benchmarking quantum devices, validating the outcomes of quantum experiments, and verifying the correctness of practical implementations of quantum algorithms and protocols. See e.g.\ \cite{buadescu2019quantum, zhu2019statistical, kliesch2021theory, huang2024certifying} for in-depth discussions of the importance and applications of quantum state certification. 

More formally, we are given a classical description of a known pure state $\ket{\tar}$, along with copies of an unknown (possibly mixed) lab state $\rho_{\lab}$. Our goal is to decide whether the fidelity $\bra{\tar}\rho_{\lab}\ket{\tar}$ is close to $1$, in which case we \textsc{Accept}, or significantly smaller than $1$, in which case we \textsc{Reject}.

If we ignore measurement complexity, the optimal-copy-complexity approach is straightforward: measure each copy of $\rho_{\lab}$ using the POVM $\{\ketbra{\tar}{\tar}, \Id - \ketbra{\tar}{\tar}\}$, and accept if the first outcome occurs frequently enough. This requires only $\Theta(1/\epsilon)$ copies to distinguish $\bra{\tar}\rho_{\lab}\ket{\tar} < 1 - \epsilon$ from $\bra{\tar}\rho_{\lab}\ket{\tar} = 1$ (or even $\geq 1 - \epsilon/2$).

However, actually implementing this $n$-qubit measurement is roughly as hard as preparing the state~$\ket{\tar}$. Given that a main use of state certification is to verify that we can in fact prepare $\ket{\tar}$, it is almost circular to allow measurement of the POVM $\{\ketbra{\tar}{\tar}, \Id - \ketbra{\tar}{\tar}\}$. What we would  actually like is a much simpler algorithm for certifying a state --- ideally something that's polynomially complex even when preparing $\ket{\tar}$ is exponentially complex (as it is for most states). These considerations motivate certification algorithms that use only single-qubit measurements: This would allow for efficient certification even when preparation is more complex and unreliable.

Prior to this work, it was unknown whether all pure states $\ket{\tar}$ could be certified using only few single-qubit measurements.\footnote{We remark that if $\ket{\tar}$ is allowed to be a mixed state, it is known that certification in general requires exponentially many copies, even when arbitrarily complex measurements are allowed~\cite{CHW07,swan1swan2_21}.} Several algorithms had been developed for this task, but each required some significant concession. For example, some algorithms require exponentially many copies of $\rho_{\lab}$: \cite{flammia2011direct,da2011practical, aolita2015reliable}. Others apply only to restricted classes of hypothesis states or restricted noise models:~\cite{hayashi2006study,flammia2011direct,aolita2015reliable,hayashi2015verifiable,morimae2016quantum,morimae2017verification,takeuchi2018verification,gluza2018fidelity,hayashi2019verifying,li2019efficient,liu2019efficient,yu2019optimal,zhu2019efficient,li2021verification,huang2024certifying}. 

Indeed, given each of these prior results comes with caveats, it was natural to suspect that fully general certification using only polynomially many single-qubit measurements might be impossible. Even after Huang, Preskill, and Soleimanifar~\cite{huang2024certifying} gave such an algorithm for Haar-random target states $\ket{\tar}$ (which are typically highly entangled), they left it open to identify an explicit state $\ket{\tar}$ for which certification with single-qubit measurements requires super-polynomially many copies.

Perhaps surprisingly, this work shows that there is no such state: Efficient certification with single-qubit measurements is possible for \emph{all} quantum states. We present a simple, general quantum state certification algorithm that works for every pure hypothesis state $\ket{\tar}$, and which uses only single-qubit measurements applied to $O(n/\epsilon)$ copies of $\rho_{\lab}$:

\begin{maintheorem}\label{thm:main with repetitions}
    There exists an algorithm that given parameters $0 < \eps, \delta < 1$, oracle access to the classical description of a pure state $\ket{\tar}$ (via the model in \Cref{def:access model}), and $O(n\epsilon^{-1}\ln(1/\delta))$ copies of $\rho_{\lab}$, makes only single-qubit measurements to the copies of $\rho_{\lab}$ and outputs:
    \begin{itemize}
        \item \textsc{Accept} with probability at least $1-\delta$ if $\bra{\tar}\rho_{\lab}\ket{\tar} \geq 1 - \frac{\epsilon}{2n}$.
        \item \textsc{Reject} with probability at least $1-\delta$ if $\bra{\tar}\rho_{\lab}\ket{\tar}\leq 1-\eps$.
    \end{itemize}
    Moreover, the algorithm runs in time linear in the number of measurements.\vspace{1ex}
\end{maintheorem}

This theorem follows directly from \Cref{thm:main} below, together with repetition and a Chernoff bound:
\begin{theorem}\label{thm:main}
    There exists an algorithm that given oracle access to the classical description of a pure state $\ket{\tar}$ via the model in \Cref{def:access model} and \emph{one} copy of a mixed state $\rho_{\lab}$, makes only single-qubit measurements to $\rho_{\lab}$ and outputs: 
    \begin{itemize}
        \item \textsc{Accept} with probability at least $\text{Fid} \coloneqq \bra{\tar}\rho_{\lab}\ket{\tar}$;
        \item \textsc{Reject} with probability at least $\text{Infid}/n \coloneqq (1-\bra{\tar}\rho_{\lab}\ket{\tar})/n$.
    \end{itemize}
    Moreover, the algorithm runs in time linear in the number of measurements.
\end{theorem}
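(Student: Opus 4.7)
The first step is to reformulate the two probability inequalities as a single operator inequality on the overall ``accept'' POVM element $A$ induced by the (possibly randomized) adaptive algorithm. The requirements $\Pr[\textsc{Accept}] \ge \bra{\tar}\rho_{\lab}\ket{\tar}$ and $\Pr[\textsc{Reject}] \ge \tfrac{1}{n}\bigl(1 - \bra{\tar}\rho_{\lab}\ket{\tar}\bigr)$ for \emph{all} $\rho_{\lab}$ are together equivalent to
\[
\ket{\tar}\bra{\tar} \;\preceq\; A \;\preceq\; \bigl(1 - \tfrac{1}{n}\bigr)\,\Id + \tfrac{1}{n}\,\ket{\tar}\bra{\tar}.
\]
In words, $\ket{\tar}$ must be a $1$-eigenvector of $A$, and the restriction of $A$ to $\ket{\tar}^\perp$ must have operator norm at most $1 - 1/n$. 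So the whole task is: build a POVM of this shape out of adaptive single-qubit measurements.

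The natural primitive is the \emph{adaptive Schmidt decomposition} of $\ket{\tar}$. At step $i$ with history $x_{<i}$, the conditional state $\ket{\tar}^{x_{<i}}$ on qubits $i,\dots,n$ admits a Schmidt decomposition $\sqrt{p_0}\ket{b_0}\ket{\psi_0} + \sqrt{p_1}\ket{b_1}\ket{\psi_1}$ across qubit $i$ versus the rest, where the Schmidt basis $\{\ket{b_0},\ket{b_1}\}$ on qubit $i$ diagonalizes its reduced density matrix. Measuring qubit $i$ in this basis faithfully samples the ideal Schmidt statistics and leaves the remaining qubits in the corresponding conditional Schmidt state. A purely sequential-Schmidt protocol is, however, useless for rejection, since every outcome sequence has positive probability under $\ket{\tar}$; so randomness must be injected.

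The plan is to choose a uniformly random ``check position'' $i\in[n]$ and define $n$ adaptive protocols $\{\text{Alg}_i\}_{i=1}^n$, each of which runs adaptive Schmidt up to qubit $i-1$ and then performs a single-qubit verification at qubit $i$ (and possibly completes the remaining qubits) designed so that (a) $\ket{\tar}$ is accepted with probability exactly $1$, and (b) the rejection subspace of $\text{Alg}_i$ is aligned with the $i$-th ``Schmidt-perpendicular'' vector $\ket{\tarp_i}$ --- roughly, the state obtained by flipping the Schmidt amplitudes at the $i$-th cut, as in the macros $\ktpO,\ktpI,\ktfO,\ktfI$. Each accept POVM $A_i$ then satisfies $A_i\ket{\tar} = \ket{\tar}$, and averaging gives $A = \tfrac{1}{n}\sum_i A_i$. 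The main obstacle, and the heart of the analysis, is to show
\[
\sum_{i=1}^n A_i \;\preceq\; (n-1)\Id + \ket{\tar}\bra{\tar},
\]
equivalently, that the Schmidt-perpendicular vectors $\ket{\tarp_i}$ appearing at the $n$ different positions are mutually orthogonal (and orthogonal to $\ket{\tar}$), so that each $\ket{v}\perp\ket{\tar}$ is rejected by at least one $\text{Alg}_i$ with probability at least $1/n$ on average. The expected route is an inductive/telescoping argument on the adaptive Schmidt tree, exploiting the fact that the Schmidt bases at successive cuts are defined by orthogonal projections on disjoint qubit factors.

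Once this operator inequality is in hand, the two probability bounds follow immediately from $\Pr[\textsc{Accept}] = \Tr(A\rho_{\lab})$ and linearity; the runtime claim follows since each adaptive step requires only a conditional Schmidt decomposition, computable in time $\operatorname{poly}(n)$ using the oracle access to $\ket{\tar}$ in \Cref{def:access model}, and the algorithm uses $n$ such steps.
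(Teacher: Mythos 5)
Your reduction to the operator inequality $\ketbra{\tar}{\tar} \preceq A \preceq \bigl(1-\tfrac1n\bigr)\Id + \tfrac1n \ketbra{\tar}{\tar}$ is correct, and your skeleton --- a uniformly random check position, acceptance of $\ket{\tar}$ with certainty, and a rejection direction obtained by ``flipping the amplitudes at the $k$-th cut'' --- does match the paper's proof in spirit: the paper's fidelity gap $\Delta(\ket{\lab^x}:\ket{\tar^x})$ is exactly the squared overlap of the lab state with such a flipped vector, and the paper's analogue of your orthogonality claim is a telescoping sum of fidelity gaps over the cut position. But the proposal omits the one idea that makes any of this implementable with single-qubit measurements, and what it asks for in its place is impossible. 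The flipped vector $\ket{\tarp_i}$ is an \emph{entangled} state of qubits $i,\dots,n$. Any adaptive single-qubit strategy (even randomized) yields POVM elements that are convex combinations of operators diagonal in a decision-tree basis of product states, hence separable; no such element can equal the projector onto $\mathrm{span}\{\ket{x_{<i}}\otimes\ket{\tarp_i^{x_{<i}}}\}$. So your step (b), aligning the rejection subspace of $\mathrm{Alg}_i$ with $\ket{\tarp_i}$, cannot be realized exactly, and the plan does not say how to realize it approximately. Relatedly, a ``single-qubit verification at qubit $i$'' performed \emph{before} qubits $i+1,\dots,n$ are measured cannot work: conditioned on $x_{<i}$, the target's qubit $i$ is a mixed state, so no basis measurement of that qubit alone accepts $\ket{\tar}$ with probability $1$ while rejecting anything.

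The paper's resolution, absent from your plan, is to reorder the measurements and to construct a special basis for the deferred qubits: after measuring qubits $1,\dots,k-1$ (in the computational basis --- the Schmidt basis is not needed), it measures qubits $k+1,\dots,n$ \emph{first}, in a decision-tree basis chosen so that both conditional target states $\ket{\tar^{x0}}$ and $\ket{\tar^{x1}}$ are \emph{phase states}; such a basis always exists, built one qubit at a time by picking each measurement axis Bloch-orthogonal to both reduced single-qubit states. Because both branches of the target are flat in this basis, every outcome $\ell$ collapses the target's qubit $k$ to a known pure state, which a final single-qubit measurement then checks. The resulting accept and reject probabilities are not the ideal projections but satisfy the one-sided bounds $\Pr[\text{accept}]\geq\abs{\braket{\tar^x}{\lab^x}}^2$ and $\Pr[\text{reject}]\geq\Delta(\ket{\lab^x}:\ket{\tar^x})$, which is all the telescoping argument needs. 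Without this construction, or a substitute for it, your argument does not go through.
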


Our algorithm also shows the power of \emph{adaptive} measurements.  For each copy of $\rho_{\lab}$, how our algorithm measures the next qubit sometimes depends on the outcomes of the previous measurements (though our algorithm is nonadaptive across copies).  This is in contrast to the fully non-adaptive algorithm appearing in~\cite{huang2024certifying}. 
The use of adaptivity turns out to be necessary; we show an exponential copy-complexity lower bound against algorithms that are non-adaptive or only use a small amount of adaptivity.

To be precise, we say that an algorithm is $\ell$-adaptive if for each copy of $\rho_{\lab}$ it receives, there is some $S\subseteq[n]$ with $|S|\leq \ell$ such that it measures the qubits in $[n]\setminus S$ in a product basis $\bigotimes_{i\in[n]\setminus S}\calB_i$ (each $\calB_i = \{\ket{\smash{b_i}}, \ket{\smash{b_i^\bot}}\}$ a $1$-qubit basis) and measures the rest of the qubits in some arbitrary basis, potentially depending on the previous outcomes.  We allow the algorithm to act adaptively across copies; that is, the product basis it uses for the $t$th copy may depend on measurement outcomes from the preceding copies.
\begin{theorem}\label{prop:lower bound}
    There exist $c_1, c_2, c_3 ,c_4> 0$ such that the following holds. For $n\geq1$, there exists an $n$-qubit pure state $\ket{\tar}$ and an $n$-qubit mixed state $\rho_{\lab}$ such that: $\bra{\tar}\rho_{\lab}\ket{\tar}\leq 2^{-c_1n}$, but any $c_2n$-adaptive certification algorithm making that succeeds with probability at least $1/2 + 2^{-c_3 n}$ in distinguishing $\ketbra{\tar}{\tar}$ from $\rho_{\lab}$ must use at least $2^{c_4 n}$ copies of~$\rho_{\lab}$. 
\end{theorem}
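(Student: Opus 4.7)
The plan is to construct a hard-instance pair $(\ket{\tar},\rho_\lab)$ by a randomized ``hidden subset'' design, and then bound the per-copy distinguishing advantage uniformly over all $c_2n$-adaptive POVMs by a chi-squared calculation.

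First, I would set $\ell=c_2n$ with $c_2\in(1/2,1)$. The adversary picks a subset $S\subseteq[n]$ of size $\ell$ and an orthonormal family $\{\ket{\psi_x}\}_{x\in\{0,1\}^{n-\ell}}$ of $\ell$-qubit states (possible since $2^{n-\ell}\le 2^\ell$), and forms the skeleton pair
\[
\ket{\phi}=\tfrac{1}{\sqrt D}\sum_x\ket{x}_{\bar S}\otimes\ket{\psi_x}_S,\qquad \sigma=\tfrac{1}{D}\sum_x\ketbra{x}{x}_{\bar S}\otimes\ketbra{\psi_x}{\psi_x}_S,
\]
where $D=2^{n-\ell}$. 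A direct calculation gives $\bra{\phi}\sigma\ket{\phi}=1/D=2^{-(1-c_2)n}$, setting $c_1=1-c_2$. The skeleton is too easy on its own: since the algorithm knows $\ket{\tar}$, it can identify the natural cut $\bar S$ from the Schmidt structure of $\ket{\phi}$ and choose $S'=S$ to distinguish pure from decohered via any rotated basis on $\bar S$. To hide $S$, I would apply a Haar-random Clifford $U$ and set $\ket{\tar}:=U\ket{\phi}$, $\rho_\lab:=U\sigma U^\dagger$; fidelity is preserved and $\ket{\tar}$ becomes approximately $(n/2)$-uniform, so every $\le n/2$-qubit marginal is close to maximally mixed.

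Next, for any fixed $c_2n$-adaptive POVM $\{M_{x,y}=P_x\otimes Q_{x,y}\}$, write
\[
p_\tar(x,y)-p_{\rho_\lab}(x,y)=\tr\bigl(M_{x,y}(\ketbra{\tar}{\tar}-\rho_\lab)\bigr)
\]
and expand in the Pauli basis. Two facts constrain the calculation: (i) $M_{x,y}$ is supported only on Paulis of the form $B_T\otimes R$ with $T\subseteq[n]\setminus S'$ diagonal in the chosen product basis and $R$ an arbitrary Pauli on $S'$; (ii) $\ketbra{\phi}{\phi}-\sigma$ is purely off-diagonal in the frame $\{\ket{x}\ket{\psi_x}\}$, and after Clifford twirl its Pauli mass becomes spread approximately uniformly across non-identity Paulis. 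A second- and fourth-moment calculation over the Clifford group (using the 3-design property) should give $\mathbb{E}_U\,\chi^2(p_\tar,p_{\rho_\lab})\le 2^{-\Omega(n)}$ uniformly in the algorithm's choices of $(S',\vec B,\{Q_{x,y}\})$. For the full $T$-copy adaptive protocol I would then apply the standard hybrid argument,
\[
d_{TV}\bigl(\mu^{(T)}_\tar,\mu^{(T)}_{\rho_\lab}\bigr)\le T\cdot\max_{\mathcal A}\sqrt{\tfrac{1}{2}\chi^2}\le T\cdot 2^{-\Omega(n)/2},
\]
and choose $c_4$ small enough that for $T\le 2^{c_4n}$ this is $\le 2^{-c_3n}$, contradicting the hypothesized advantage; a final averaging over $U$ produces an explicit hard instance.

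The main obstacle will be the uniform chi-squared bound in the second step. A naive calculation gives only $\chi^2=O(1)$ per copy, which is the same bound one gets for a generic pure state versus the maximally mixed state and is nowhere near sufficient. Making the bound exponentially small relies on a delicate interplay between the restricted Pauli support of $c_2n$-adaptive POVM elements and the Clifford-twirl-induced spreading of the Pauli weight of $\ketbra{\tar}{\tar}-\rho_\lab$ onto high-weight Paulis poorly aligned with the POVM's Pauli support. The quantitative trade-off between $\ell=c_2n$ and the achievable chi-squared exponent is precisely what pins down the constants $c_1,c_2,c_3,c_4$, and verifying that the bound holds simultaneously for every adversarial choice of adaptive subset $S'$, product basis $\vec B$, and adaptive POVM $\{Q_{x,y}\}$ --- not merely in expectation over a typical one --- is the technical heart of the argument.
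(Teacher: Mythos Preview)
Your proposal is a plan whose central step is explicitly left open, and there are concrete reasons to believe that step fails as stated.

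The order-of-quantifiers issue is fatal in its current form. You want to show $\exists(\ket{\tar},\rho_\lab)$ such that $\forall$ $c_2n$-adaptive algorithms the per-copy TV distance is $2^{-\Omega(n)}$. But the algorithm is handed the full description of $\ket{\tar}=U\ket{\phi}$, so its choice of $(S',\vec B,\{Q_{x,y}\})$ is a function of~$U$. A moment bound of the form $\sup_{\text{alg}}\mathbb{E}_U[\chi^2]\le 2^{-\Omega(n)}$ says nothing here; you would need $\mathbb{E}_U\bigl[\sup_{\text{alg}(U)}\chi^2\bigr]\le 2^{-\Omega(n)}$, and the $3$-design property of the Clifford group gives you no handle on that inner supremum over a continuum of bases. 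You note this is ``the technical heart,'' but there is no argument offered for it. Worse, there is direct evidence against the construction: if you take the $\ket{\psi_x}$ to be computational basis vectors (the simplest way to realize your orthonormal family), then $\ket{\phi}$ is a stabilizer state, hence so is $\ket{\tar}=U\ket{\phi}$ for any Clifford~$U$, and stabilizer states are certifiable with \emph{nonadaptive} single-qubit Pauli measurements via direct fidelity estimation. So for that instantiation your pair is distinguishable in $O(1)$ copies, and the Clifford twirl hides nothing because the algorithm sees~$U$. More broadly, Huang--Preskill--Soleimanifar show that Haar-random targets are certifiable nonadaptively, so ``generic randomization'' of a fixed skeleton is exactly the regime where one should \emph{not} expect a lower bound.

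The paper's proof takes an entirely different route that avoids all of this. It builds $\ket{\tar}$ as a (near-)equal superposition of $N=2^{\Theta(n)}$ random product states over the SIC-POVM alphabet, $\ket{\psi_{c^t}}=\bigotimes_i\ket{\chi_{c^t_i}}$, and takes $\rho_\lab$ to be the corresponding uniform mixture. The single-copy TV distance in any product (or $\ell$-adaptive) basis reduces to a sum over cross terms $\sum_{s\ne t}\prod_i\bigl(|\braket{\phi^i_0}{\chi_{c^s_i}}\braket{\chi_{c^t_i}}{\phi^i_0}|+|\braket{\phi^i_1}{\chi_{c^s_i}}\braket{\chi_{c^t_i}}{\phi^i_1}|\bigr)$. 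A geometric fact about the SIC tetrahedron says that for \emph{every} qubit basis there is some $b\in\{1,2,3\}$ with the $\{\chi_0,\chi_b\}$ cross term bounded by $0.99$; a combinatorial ``quintuple of pairs'' argument then shows that with high probability over~$C$, for every $\ell$-adaptive basis all but $O(1)$ pairs $(s,t)$ pick up exponentially many such factors. The union bound is over at most $N^{10}$ quintuples, which is where the construction's discreteness pays off --- this is exactly the uniform-over-all-bases control that your Clifford-twirl approach lacks.
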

We prove \Cref{prop:lower bound} in \Cref{sec:lower bound}.

\paragraph{Future directions.} Several natural questions remain open for future work. We state a few here:

First, a basic information-theoretic argument shows that any certification algorithm for a pure target state $\ket{\tar}$ must use at least $\Omega(1/\epsilon)$ copies of $\rho_{\lab}$, even without restricting to single-qubit measurements. Can stronger lower bounds be proven for algorithms limited to single-qubit measurements? In particular, our algorithm uses $O(n/\epsilon)$ copies --- is this dependence on $n$ necessary?

Second, our algorithm is only $1/n$-tolerant: To guarantee $\rho_{\lab}$ is accepted with high probability, it needs to satisfy $\bra{\tar}\rho_{\lab}\ket{\tar} \geq 1 - \Omega(\epsilon/n)$, rather than $1-\Omega(\eps)$. Can this dependence on $n$ be improved or even eliminated? For instance, is it possible to design a certification algorithm with similar complexity that distinguishes between $\bra{\tar}\rho_{\lab}\ket{\tar} \geq 1 - \epsilon/2$ and $\bra{\tar}\rho_{\lab}\ket{\tar} \leq 1 - \epsilon$?

\Cref{prop:lower bound} shows that adaptivity is required to certify a worst-case state with single-qubit measurements. Unfortunately, adaptivity is a significant downside for practical implementations. In contrast to our lower bound, Huang et al.~\cite{huang2024certifying} showed that for an average-case state, adaptivity is not necessary. Can we classify the set of states that require adaptivity to certify? Moreover, for states that do require adaptivity to certify, how much is necessary? Our algorithm requires $n$ rounds of adaptivity, but perhaps this can be reduced.

%





\section{The Algorithm}

In this section we prove \Cref{thm:main}.

\subsection{DT Bases}
At a high level, our algorithm will pick a random $k\in [n]$, measure the first $k-1$ qubits in the computational basis, then measure the last $n-k$ qubits in some carefully constructed basis, and finally measure the $k$th qubit. We begin with some definitions and facts that will help us understand the carefully constructed basis in which the last $n-k$ qubits are measured.
\begin{definition}
    Let $\calT$ be a depth-$n$ binary tree~$\calT$ in which each internal node's two outgoing edges are labeled by orthogonal single-qubit states. 
    By a slight abuse of notation, we identify $\calT$ with the orthonormal basis $\{\ket{\ell} : \ell \text{ a leaf in } \calT\}$, and call $\calT$ a \emph{decision tree (DT) basis} for $(\C^2)^{\otimes n}$. 
\end{definition}
\begin{remark}
    Note that given an $n$-qubit DT basis $\calT$, one can measure an $n$-qubit state in this basis by using adaptive single-qubit measurements.
    Indeed, a deterministic adaptive algorithm that measures qubits in the order $1 \dots n$ is \emph{equivalent} to a DT basis.
    We also comment that although $\calT$ is a \emph{basis of product states}, this is a more general object than a \emph{product basis}, the latter being what a non-adaptive algorithm uses to measure each copy.
\end{remark}

\begin{definition}
    Recall that qudit $\ket{\phi} \in \C^d$ is said to be a \emph{phase state} in basis $\mathcal{L} = \{ \ket{1}, \dots, \ket{d} \}$ if it can be written as
    \begin{equation*}
        \ket{\phi} = \frac{1}{\sqrt{d}} \sum_{\ell=1}^d \omega_\ell \ket{\ell} \qquad \text{for some phases $\omega_\ell$}.
    \end{equation*}
    In other words, measuring in basis $\cal{L}$ yields each outcome with equal probability.
\end{definition}

\begin{lemma}\label{lem:perpy}
    Let $\rho^0$ and $\rho^1$ be single-qubit \emph{mixed} states. Then there exists a $1$-qubit basis $\calB = \{\ket{b}, \ketbb\}$ such that for $i = 0, 1$, measuring $\rho^i$ in basis $\calB$ yields each outcome $\ket{b}$ and $\ketbb$ with probability~$\frac12$.
\end{lemma}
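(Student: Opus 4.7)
The plan is to use the Bloch sphere representation of single-qubit states. Every single-qubit mixed state $\rho$ can be written uniquely as $\rho = \tfrac{1}{2}(\Id + \vec{r}\cdot\vec{\sigma})$ for a Bloch vector $\vec{r} \in \R^3$ with $\|\vec{r}\|_2 \leq 1$, where $\vec{\sigma} = (\sigma_X, \sigma_Y, \sigma_Z)$ is the vector of Pauli matrices. Similarly, any pure state $\ket{b}$ corresponds to a unit vector $\hat{n} \in S^2$ via $\ketbra{b}{b} = \tfrac{1}{2}(\Id + \hat{n}\cdot\vec{\sigma})$.

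Given this parameterization, the key computation is that the probability of outcome $\ket{b}$ when measuring $\rho$ is
\begin{equation*}
    \bra{b}\rho\ket{b} = \tr(\rho\ketbra{b}{b}) = \tfrac{1}{2}(1 + \vec{r}\cdot\hat{n}),
\end{equation*}
using the identity $\tr(\sigma_a \sigma_b) = 2\delta_{ab}$ for the Pauli matrices. So measuring $\rho^i$ in basis $\calB$ gives each outcome with probability $\tfrac{1}{2}$ precisely when $\vec{r}_i \cdot \hat{n} = 0$, where $\vec{r}_i$ is the Bloch vector of $\rho^i$.

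Thus the task reduces to finding a unit vector $\hat{n} \in \R^3$ orthogonal to both $\vec{r}_0$ and $\vec{r}_1$. Since $\vec{r}_0, \vec{r}_1 \in \R^3$ span a subspace of dimension at most $2$, its orthogonal complement is at least $1$-dimensional, so such an $\hat{n}$ always exists (e.g.\ take $\hat{n} = \vec{r}_0 \times \vec{r}_1 / \|\vec{r}_0 \times \vec{r}_1\|_2$ if the cross product is nonzero, and any unit vector orthogonal to $\vec{r}_0$ otherwise). Setting $\ketbra{b}{b} = \tfrac{1}{2}(\Id + \hat{n}\cdot\vec{\sigma})$ and letting $\ket{b^\bot}$ be its orthogonal complement (corresponding to Bloch vector $-\hat{n}$) gives the desired basis.

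There isn't really a hard step here; the proof is essentially a one-line dimension count once one converts to the Bloch picture. The only thing to double-check is that the construction behaves correctly in the degenerate case when $\vec{r}_0$ and $\vec{r}_1$ are parallel (including when one or both are zero, i.e.\ a maximally mixed state), but in all such cases the orthogonal complement is at least $2$-dimensional, so a suitable $\hat{n}$ is trivially available.
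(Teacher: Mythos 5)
Your proof is correct and follows the same route as the paper: pick $\ket{b}$ whose Bloch vector is orthogonal to the Bloch vectors of $\rho^0$ and $\rho^1$, which exists by a dimension count in $\R^3$. You simply spell out the trace computation and the degenerate cases that the paper's one-line proof leaves implicit.
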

\begin{proof}
     It suffices to choose qubit $\ket{b}$ so that its Bloch sphere representation is orthogonal to that of $\rho^0, \rho^1$ (this choice is unique up to sign unless $\rho^0, \rho^1$ are collinear in the Bloch ball).  For such a $\ket{b}$, and its Bloch-antipode $\ketbb$, 
     measuring $\rho^i$ with respect to~$\{\ket{b}, \ketbb\}$ indeed gives each outcome with probability~$\frac12$.
\end{proof}


\begin{corollary} \label{cor:perpy}
     Let $\ket{\phi^0}$ and $\ket{\phi^1}$ be $m$-qubit quantum states. Then there exists a decision tree basis~$\calT$ for $(\C^2)^{\otimes n}$ in which they are both phase states.\footnote{This corollary can also be obtained from the algorithm in~\cite[Sec.~III]{zhou2020saturating}, by taking its matrix $\tilde{M}$ to be $(1+\mathfrak{i})\ketbra{\phi^0}{\phi^0} - \ketbra{\phi^1}{\phi^1} - \mathfrak{i} \cdot \Id/2^n$.}
\end{corollary}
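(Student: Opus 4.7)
My plan is to prove the corollary by induction on $m$, building the decision tree top-down and using \Cref{lem:perpy} at each internal node. The key observation is that \Cref{lem:perpy} applies to \emph{mixed} single-qubit states, which is exactly what one gets by partial-tracing an $m$-qubit pure state down to one of its qubits, so a recursive construction goes through cleanly.

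For the base case $m=1$, I would simply view $\ket{\phi^0}$ and $\ket{\phi^1}$ as their (pure) density matrices, apply \Cref{lem:perpy} directly to obtain a basis $\{\ket{b},\ketbb\}$ that makes each outcome equally likely, and take this as the tree. For the inductive step, I let $\rho^0$ and $\rho^1$ be the single-qubit reduced states of $\ket{\phi^0}$ and $\ket{\phi^1}$ on the first qubit. By \Cref{lem:perpy} there is a $1$-qubit basis $\calB = \{\ket{b},\ketbb\}$ in which both $\rho^0$ and $\rho^1$ produce each outcome with probability~$\frac12$. I would use $\calB$ to label the two outgoing edges from the root. For each outcome $c \in \{b, b^\bot\}$, conditioning on measuring $c$ on the first qubit of $\ket{\phi^i}$ produces a \emph{pure} $(m-1)$-qubit post-measurement state $\ket{\phi^i_c}$. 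Applying the inductive hypothesis to the pair $\ket{\phi^0_c}, \ket{\phi^1_c}$ yields a DT basis $\calT_c$ on $m-1$ qubits in which both are phase states; I attach $\calT_b$ and $\calT_{b^\bot}$ as the subtrees beneath the two root edges to form $\calT$.

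To verify correctness, for any leaf $\ell$ of $\calT$ reached by outcome $c$ at the root followed by a leaf $\ell'$ of $\calT_c$, the Born rule gives
\[
\lvert \langle \ell \mid \phi^i \rangle \rvert^2 = \tfrac12 \cdot \lvert \langle \ell' \mid \phi^i_c \rangle \rvert^2 = \tfrac12 \cdot \tfrac{1}{2^{m-1}} = \tfrac{1}{2^m},
\]
for $i = 0,1$, so both states are phase states in $\calT$.

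I do not expect a real obstacle here: the only point that needs care is remembering that the post-measurement states $\ket{\phi^i_c}$ are pure (so the induction is applicable) while the single-qubit reductions $\rho^0, \rho^1$ are in general mixed (so invoking the mixed-state version of \Cref{lem:perpy} is essential). The latter is the reason \Cref{lem:perpy} had to be stated for mixed states in the first place. The alternative construction in the footnote via $\tilde{M}$ would be a non-recursive route, but the inductive proof above seems the most transparent given the tools already developed.
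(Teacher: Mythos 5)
Your proof is correct and is essentially the paper's own argument: both build the tree by applying \Cref{lem:perpy} at each internal node to the single-qubit reduced states of the (pure) conditional post-measurement states, the only cosmetic difference being that you phrase it as a top-down recursion on $m$ while the paper phrases it as induction on the depth $k$ of the tree constructed so far. The point you flag---that \Cref{lem:perpy} must handle mixed states because the one-qubit marginals of the conditional states are mixed---is exactly the reason the paper states that lemma for mixed states.
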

\begin{proof}
    For $k = 1 \dots m$, we inductively build the labels on the edges at depth at most $k$ so that for both $i = 0, 1$ and all depth-$k$ nodes~$w$, the probability of obtaining outcome~$w$ when measuring $\ket{\phi^i}$ with the tree (so far) is~$2^{-k}$.
    For the base case of $k = 1$, we apply \Cref{lem:perpy}; the resulting $1$-qubit basis $\calB = \{\ket{b}, \ketbb\}$ serves as the edge labels out from the root of $\calT$.
    Now to extend from $k$ to $k+1$ we need, for each node $w$ at depth~$k$, a $1$-qubit basis $\calB_w = \{\ket{b_w}, \ket{b_w^\bot}\}$ such that measuring each of the reduced states $\ket{(\phi^0)^{w}}, \ket{(\phi^1)^{w}}$ in basis~$\calB_w$ yields each outcome with probability~$\frac12$. This may be obtained by applying \Cref{lem:perpy} to $\ket{(\phi^0)^{w}}, \ket{(\phi^1)^{w}}$.
\end{proof}

\subsection{Our Algorithm and Its Key Subtest}
\newcommand{\tarline}{\ket{\tar'}}
\newcommand{\labline}{\ket{\lab'}}
Let $\ket{\tar}$ be the $n$-qubit state to be certified. In our proof of \Cref{thm:main}, we may assume without loss of generality that the lab state is a pure one,~$\ket{\lab}$. This is because we can regard $\rho_{\lab}$ as a probability distribution over pure states, since our algorithm may only perform measurements to a single copy of lab. (This equivalence can be shown by linearity of expectation.)
Also, for expositional simplicity, in this section we regard $\ket{\tar}$ as completely ``known''; we will not be concerned with the complexity of interacting with $\ket{\tar}$.  The straightforward details of the access model we actually assume, and the running time, are deferred to \Cref{sec:access model}.

We will use a collection of DT bases $\calT_x$, for $x\in \{0,1\}^{\leq n}$.
For all $1\leq k\leq n$ and all binary strings $x\in\{0,1\}^{k-1}$, let $\ket{\tar^x}$ and $\ket{\lab^x}$ be the reduced states of $\ket{\tar}$ and $\ket{\lab}$, respectively, after measuring the first $k-1$ qubits of these states in the computational basis and observing $x$. For all $x\in\{0,1\}^{\leq n}$, define the DT basis $\calT_x$ so that in this basis, both $\ket{\tar^{x0}}$ and $\ket{\tar^{x1}}$ are phase states; we know this DT basis exists by \Cref{cor:perpy}.

We may now state our algorithm:

\begin{algorithm}[H]
    \vspace{0.3em}
    \textbf{Input:} Full classical description of $\ket{\tar}$ and one copy of an unknown state $\ket{\lab}$. \\
    \textbf{Output:} \textsc{Accept} or \textsc{Reject}.
    \begin{algorithmic}[1]
    \State \textbf{Sample} $k\in[n]$ uniformly at random.
    \State \textbf{Measure} the first $k-1$ qubits of $\ket{\lab}$ in the computational basis, obtaining~$x\in\{0,1\}^{k-1}$.    
    \State \textbf{Measure} the last $n-k$ qubits of $\ket{\lab^x}$ in the basis $\calT_x$, obtaining~$\ell$.
    \State Let $\labline$ be the resulting $1$-qubit state; let $\tarline$ denote $\ket{\tar}$ conditioned on outcomes $x, \ell$.
    
    \noindent \textbf{Measure} $\labline$ in a basis containing $\tarline$, 
    and \textbf{Output} \textsc{Accept} iff the outcome is $\tarline$.
    \end{algorithmic}
    \caption{\textsc{Certify}$(\ket{\lab}, \ket{\tar})$}
    \label{alg:main}
\end{algorithm}

\begin{definition}
    We refer to Steps 3--4 of our algorithm as the \emph{subtest} performed on the \mbox{$(n-k)$}-qubit states $\ket{\tar^x}$ and $\ket{\lab^x}$, with DT basis $\calT_x$.  We will also use the notation $\textsc{SubTest}_{\calT_x}(\ket{\lab^x}: \ket{\smash{\tar^x}})$.
\end{definition}

The key to analyzing the $\textsc{Certify}$ algorithm is to compare the subtest acceptance probability 
to the following quantity:

\begin{definition}
    For $(n-k+1)$-qubit states $\ket{\lab^x}, \ket{\tar^x}$ as in $\textsc{Certify}$, 
    we define their \emph{fidelity gap} to be
    \begin{equation*}
        \Delta(\ket{\lab^x}: \ket{\tar^x}) \coloneqq \mathop{\E}_{b}\left[\abs{\braket{\tar^{xb}}{\lab^{xb}}}^2\right]- \abs{\braket{\tar^x}{\lab^x}}^2.
    \end{equation*}
    Here the expectation is over the random outcome $b \in \{0,1\}$ obtained when measuring the first qubit of $\ket{\lab^x}$ in the computational basis. 
\end{definition}
\begin{remark} \label{rem:nonneg}
    $\Delta(\ket{\lab^x}: \ket{\tar^x}) \geq 0$ always. This follows from the data processing inequality for fidelity, and is also a direct consequence of the Cauchy--Schwarz inequality.
\end{remark}

The central analysis in our proof will be the following:
\begin{theorem} \label{thm:subtest}
        With $\calT_x$ being a (DT) basis in which $\ket{\tar^{x0}}$ and $\ket{\lab^{x1}}$ are both phase states, 
        the following holds:
     \begin{align}
         \Pr[\textsc{SubTest}_{\calT_x}(\ket{\lab^x} : \ket{\smash{\tar^x}}) \text{ rejects}] 
         &\geq \Delta(\ket{\lab^x}: \ket{\tar^x}), \label{eqn:reject}\\
         \Pr[\textsc{SubTest}_{\calT_x}(\ket{\lab^x} : \ket{\tar^x}) \text{ accepts}]
         &\geq \abs{\braket{\tar^x}{\lab^x}}^2. \label{eqn:accept}
     \end{align}
\end{theorem}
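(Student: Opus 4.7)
The natural approach is to expand both states in coordinates adapted to $\calT_x$.  Writing
\[
\ket{\lab^x} = \alpha_0 \ket{0}\ket{\lab^{x0}} + \alpha_1 \ket{1}\ket{\lab^{x1}}, \qquad \ket{\tar^x} = \beta_0 \ket{0}\ket{\tar^{x0}} + \beta_1 \ket{1}\ket{\tar^{x1}},
\]
I expand $\ket{\lab^{xb}} = \sum_\ell c^{xb}_\ell \ket{\ell}$ and, using the phase-state hypothesis on the target, $\ket{\tar^{xb}} = \tfrac{1}{\sqrt{2^{n-k}}}\sum_\ell \omega^{xb}_\ell \ket{\ell}$ with $|\omega^{xb}_\ell|=1$.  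Then outcome $\ell$ in Step~3 leaves the first qubit in the unnormalized state $\ket{v_\ell}\coloneqq \alpha_0 c^{x0}_\ell\ket{0} + \alpha_1 c^{x1}_\ell\ket{1}$, while the conditional target $\tarline = \beta_0\omega^{x0}_\ell\ket{0} + \beta_1\omega^{x1}_\ell\ket{1}$ is automatically a unit vector; hence $\Pr[\text{accept}] = \sum_\ell |\braket{\tar'}{v_\ell}|^2$.

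The key algebraic move is to repackage the $\ell$-indexed data.  Define $v_b \in \C^{2^{n-k}}$ by $v_b(\ell)\coloneqq c^{xb}_\ell \overline{\omega^{xb}_\ell}$: each is a unit vector, and its inner product with the normalized all-ones vector $e\coloneqq \tfrac{1}{\sqrt{2^{n-k}}}\mathbf 1$ is exactly $\langle e, v_b\rangle = \braket{\tar^{xb}}{\lab^{xb}} =: f_b$.  Setting $\gamma_b\coloneqq \alpha_b\overline{\beta_b}$, a short computation gives
\[
\Pr[\text{accept}] = \|\gamma_0 v_0 + \gamma_1 v_1\|^2, \qquad \braket{\tar^x}{\lab^x} = \gamma_0 f_0 + \gamma_1 f_1 = \langle e,\, \gamma_0 v_0 + \gamma_1 v_1\rangle,
\]
so inequality~\eqref{eqn:accept} is immediate from Cauchy--Schwarz applied to the last inner product, using $\|e\|=1$.

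For~\eqref{eqn:reject}, I orthogonally decompose $v_b = f_b\, e + w_b$ with $\|w_b\|^2 = 1-|f_b|^2$, which yields the Pythagorean identity $\Pr[\text{accept}] = |\braket{\tar^x}{\lab^x}|^2 + \|\gamma_0 w_0 + \gamma_1 w_1\|^2$.  Substituting into $\Pr[\text{accept}] + \Delta$, expanding the $w$-norm, and simplifying using $|\alpha_0|^2+|\alpha_1|^2 = |\beta_0|^2+|\beta_1|^2 = 1$, the desired bound $\Pr[\text{accept}] + \Delta \le 1$ collapses to
\[
|\alpha_0|^2|\beta_1|^2\|w_0\|^2 + |\alpha_1|^2|\beta_0|^2\|w_1\|^2 \;\ge\; 2\,\mathrm{Re}\!\bigl[\gamma_0\overline{\gamma_1}\,\langle w_1,w_0\rangle\bigr],
\]
which follows from $|\langle w_1,w_0\rangle|\le\|w_0\|\|w_1\|$ (Cauchy--Schwarz) combined with AM--GM on the left-hand side, using $|\gamma_0\overline{\gamma_1}| = |\alpha_0\alpha_1\beta_0\beta_1|$.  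The only nontrivial step is spotting the decomposition $v_b = f_b e + w_b$: once the phase-state property is recognized as forcing $\braket{\tar^{xb}}{\lab^{xb}}$ to be precisely the $e$-component of $v_b$, both inequalities reduce to one-line applications of Cauchy--Schwarz/AM--GM, and without this split the various cross-terms in $\Pr[\text{accept}]$ and $\Delta$ look unrelated.
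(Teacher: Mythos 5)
Your proof is correct. Its first half is essentially the paper's argument in different clothing: stripping the phases $\omega^{xb}_\ell$ off the lab components (your $v_b(\ell) = c^{xb}_\ell\overline{\omega^{xb}_\ell}$) is exactly the paper's conjugation by the diagonal unitary $D$ (its $\widetilde{v}^1 = D^\dagger v^1$), your uniform vector $e$ is the paper's $\ket{U^0}$ written in dephased coordinates, and the acceptance bound is the same single application of Cauchy--Schwarz, $|\braket{\tar^x}{\lab^x}|^2 = |\langle e, \gamma_0 v_0 + \gamma_1 v_1\rangle|^2 \le \|\gamma_0 v_0+\gamma_1 v_1\|^2$. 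You genuinely diverge on the rejection bound. The paper observes that $\Pr[\text{reject}]$ is itself a squared norm --- of the ``orthogonal combination'' $a^1 v^0 - a^0\widetilde{v}^1$ --- and that $\Delta$ equals the squared inner product of that same vector with $\ket{U^0}$, so the rejection inequality is literally a second instance of the identical Cauchy--Schwarz step; this is arguably the slicker packaging. You instead prove the equivalent statement $\Pr[\text{accept}]+\Delta\le 1$ by splitting $v_b = f_b e + w_b$ and bounding the cross term $2\Re[\gamma_0\overline{\gamma_1}\langle w_1,w_0\rangle]$ by $|\alpha_0|^2|\beta_1|^2\|w_0\|^2+|\alpha_1|^2|\beta_0|^2\|w_1\|^2$ via Cauchy--Schwarz and AM--GM; I checked the algebra reducing $1-\Pr[\text{accept}]-\Delta$ to this inequality and it is right. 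Your route has the side benefit of exhibiting the exact slack in the acceptance bound, $\Pr[\text{accept}] - |\braket{\tar^x}{\lab^x}|^2 = \|\gamma_0 w_0+\gamma_1 w_1\|^2$, but costs an extra expansion; both are complete proofs.
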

We defer the proof of \Cref{thm:subtest} to \Cref{sec:subtest} and conclude assuming it holds. For now, we use \Cref{thm:subtest} to complete the analysis of our algorithm $\textsc{Certify}$. 

\begin{proposition}
    It holds that
    \[
        \mathop{\E}_{x} \sbra{ \Delta(\ket{\lab^x}: \ket{\tar^x}) } = \frac1n\cdot \left(1- \abs{\braket{\tar}{\lab}}^2 \right)
    \]
    Here, $x$ is sampled by performing steps 1--2 of our algorithm. In other words, $x$ is sampled by first choosing a uniformly random $k \in [n]$ and then measuring the first $k-1$ qubits of $\ket{\lab}$ to obtain $x$. 
\end{proposition}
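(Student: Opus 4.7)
The plan is to rewrite $\mathop{\E}_x[\Delta(\ket{\lab^x}:\ket{\tar^x})]$ as a telescoping sum over the measurement depth. For every binary string $y$ of length $\leq n$, let $p^{\lab}_y$ denote the probability that measuring the first $|y|$ qubits of $\ket{\lab}$ in the computational basis yields $y$, set $F_y := \abs{\braket{\tar^y}{\lab^y}}^2$, and define $J_y := p^{\lab}_y F_y$. Under the natural convention that for $|y| = n$ the reduced $0$-qubit states are scalars of unit modulus, $F_y = 1$ and $J_y = p^{\lab}_y$ at depth $n$.

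The first step is to multiply the definition of $\Delta$ by $p^{\lab}_x$ and apply the chain rule for probabilities $p^{\lab}_{xb} = p^{\lab}_x \cdot p^{\lab^x}_b$, giving
\[
    p^{\lab}_x \cdot \Delta(\ket{\lab^x}:\ket{\tar^x}) \;=\; \sum_b p^{\lab}_{xb} F_{xb} - p^{\lab}_x F_x \;=\; \sum_b J_{xb} - J_x.
\]
Summing over $x \in \{0,1\}^{k-1}$, the concatenations $y = xb$ range precisely over $\{0,1\}^k$, so the right-hand side collapses to $G_k - G_{k-1}$, where $G_k := \sum_{|y|=k} J_y$.

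Since the algorithm samples $(k,x)$ with joint probability $\tfrac{1}{n} p^{\lab}_x$ for $|x| = k-1$, we then get
\[
    \mathop{\E}_x\left[\Delta(\ket{\lab^x}:\ket{\tar^x})\right] \;=\; \frac{1}{n}\sum_{k=1}^n (G_k - G_{k-1}) \;=\; \frac{1}{n}(G_n - G_0).
\]
It remains to evaluate the endpoints. For $k = 0$, the string $y$ is empty and $G_0 = p^{\lab}_{\text{empty}} F_{\text{empty}} = \abs{\braket{\tar}{\lab}}^2$. For $k = n$, the convention $F_y = 1$ gives $G_n = \sum_{|y|=n} p^{\lab}_y = 1$. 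Substituting completes the proof.

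The argument is essentially a bookkeeping exercise, and I expect no substantive obstacle. The only care-taking point is defining $F_y$ (and hence $J_y$) in the degenerate case where $p^{\tar}_y = 0$ but $p^{\lab}_y > 0$ at some intermediate depth $|y| < n$; I would handle this by a standard continuity/perturbation argument on $\ket{\tar}$, observing that both sides of the claimed identity depend continuously on $\ket{\tar}$ and the claim therefore reduces to the generic case where all computational-basis outcomes of $\ket{\tar}$ have strictly positive probability.
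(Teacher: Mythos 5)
Your proof is correct and is essentially identical to the paper's: your $G_k = \sum_{|y|=k} p^{\lab}_y F_y$ is exactly the paper's quantity $\Phi^k = \mathop{\E}_x\bigl[\abs{\braket{\tar^x}{\lab^x}}^2\bigr]$ written as an explicit weighted sum, and both arguments are the same telescoping over the measurement depth with the same endpoint evaluations $\Phi^0 = \abs{\braket{\tar}{\lab}}^2$ and $\Phi^n = 1$. The degenerate case you flag (where $\ket{\tar^y}$ is undefined because $\braket{y}{\tar}$-type amplitudes vanish) is also left implicit in the paper, so your added caution is reasonable but not a point of divergence.
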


\begin{proof}
    For $0\leq k \leq n$, define the quantity
    \begin{equation*}
        \Phi^k \coloneqq \mathop{\E}_{x} \left[\abs{\braket{\tar^{x}}{\lab^{x}}}^2\right].
    \end{equation*}
    where $x$ is sampled by measuring the first $k$ qubits of $\ket{\lab}$. Then,
    \begin{align*}
        \mathop{\E}_{x} \sbra{ \Delta(\ket{\lab^x}: \ket{\tar^x}) } &= \mathop{\E}_{k\in [n]} \sbra{ \mathop{\E}_{\substack{x, \\ |x| = k-1}}  \mathop{\E}_{b}\left[\abs{\braket{\tar^{xb}}{\lab^{xb}}}^2\right] -  \mathop{\E}_{\substack{x, \\ |x| = k-1}} \sbra{ \abs{\braket{\tar^x}{\lab^x}}^2. }}\\
        &= \mathop{\E}_{k\in [n]} \sbra{ \mathop{\E}_{\substack{x, \\ |x| = k}}  \left[\abs{\braket{\tar^{xb}}{\lab^{xb}}}^2\right] -  \mathop{\E}_{\substack{x, \\ |x| = k-1}} \sbra{ \abs{\braket{\tar^x}{\lab^x}}^2. }}\\
        &= \mathop{\E}_{k\in [n]} \sbra{ \Phi^k - \Phi^{k-1} } \\
        &= \frac{1}{n}\cdot (\Phi^n-\Phi^0) = \frac1n\cdot \left(1- \abs{\braket{\tar}{\lab}}^2 \right),
    \end{align*}
    where the last line follows by a telescoping sum.
\end{proof}

It follows from \Cref{thm:subtest} that
\begin{align*}
    \Pr[\textsc{Certify} \text{ rejects}] &= \mathop{\E}_{x} \Pr[\textsc{SubTest}_{\calT_x}(\ket{\lab^x} : \ket{\smash{\tar^x}}) \text{ rejects}]  \\
    &\geq  \mathop{\E}_{x}\ [\Delta(\ket{\lab^{x}}: \ket{\tar^{x}})]\\
    &= \frac1n\cdot \left(1- \abs{\braket{\tar}{\lab}}^2 \right)
\end{align*}
as claimed. As for the acceptance probability,
\begin{align*}
    \Pr[\textsc{Certify} \text{ accepts}] &= \mathop{\E}_{x} \Pr[\textsc{SubTest}_{\calT_x}(\ket{\lab^x}: \ket{\smash{\tar^x}}) \text{ accepts}]  \\
    &\geq \mathop{\E}_{x} \sbra{ \abs{\braket{\tar^x}{\lab^x}}^2 }\\
    &\geq \abs{\braket{\tar}{\lab}}^2,
\end{align*}
where the last inequality follows by \Cref{rem:nonneg}. This concludes the proof of \Cref{thm:main}.

\subsection{Analysis of the Subtest (Proof of \Cref{thm:subtest})} \label{sec:subtest}

We will analyze $\textsc{SubTest}_{\calT}(\ket{V}: \ket{U})$ for any two $m$ qubit states $\ket{V}$ and $\ket{U}$ and any $m-1$ qubit orthonormal basis $\calL$ in which $\ket{U^0}$ and $\ket{U^1}$ are both phase states. (Here, $\ket{U^b}$ denotes the reduced state if one were to measure the first qubit of $\ket{U}$ and observe $b$.) We denote the basis vectors of $\calL$ by $\ket{\ell}$, where $\ell \in [2^{m-1}]$.
Write
\begin{equation*}
    \ket{U} = \ket{0}\otimes u^0 + \ket{1}\otimes u^1, \qquad 
    \ket{V} = \ket{0}\otimes v^0 + \ket{1}\otimes v^1.
\end{equation*}
Since both $\ket{U^0}$ and $\ket{U^1}$ are phase states in the basis $\calL$, there is a unitary $D$ that is diagonal in $\calL$ such that $\ket{U^1} = D\ket{U^0}$. Define $\widetilde{v}^1 \coloneqq D^\dagger v^1$. Also, for ease of notation let $a^0=\norm{\smash{u^0}}$ and $a^1 = \norm{\smash{u^1}}$.
\begin{proposition} \label{prop:subtest}
    The following two inequalities hold:
    \begin{align} 
        \Pr[\textsc{SubTest} \text{ accepts}] &= \left\|a^0 v^0 + a^1 \widetilde{v}^1\right\|^2. \label{eqn:acc0} \\
        \Pr[\textsc{SubTest} \text{ rejects}] &= \left\|a^1 v^0 - a^0 \widetilde{v}^1\right\|^2. \label{eqn:start}
    \end{align}
\end{proposition}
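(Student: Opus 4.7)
The plan is a direct calculation that expands $\ket{U}$ and $\ket{V}$ in the basis $\{\ket{0},\ket{1}\}\otimes\mathcal{L}$ and then exploits the phase-state hypothesis on $\ket{U^0}, \ket{U^1}$ to simplify. Write $\ket{V}=\sum_{\ell}(\alpha_\ell\ket{0}+\beta_\ell\ket{1})\otimes\ket{\ell}$ and $\ket{U}=\sum_{\ell}(\gamma_\ell\ket{0}+\delta_\ell\ket{1})\otimes\ket{\ell}$, so $\alpha_\ell,\beta_\ell,\gamma_\ell,\delta_\ell$ are the $\mathcal{L}$-coordinates of $v^0,v^1,u^0,u^1$. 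Set $d=2^{m-1}$. The phase-state hypothesis forces $\abs{\gamma_\ell}=a^0/\sqrt d$ and $\abs{\delta_\ell}=a^1/\sqrt d$, so we may write $\gamma_\ell=(a^0/\sqrt d)\omega^0_\ell$ and $\delta_\ell=(a^1/\sqrt d)\omega^1_\ell$ for unit-modulus phases $\omega^0_\ell,\omega^1_\ell$. The critical consequence is that $p_\ell^U\coloneqq\abs{\gamma_\ell}^2+\abs{\delta_\ell}^2 = 1/d$ is uniform in $\ell$.

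I would then condition on the outcome $\ell$ of the $\mathcal{L}$-measurement, which occurs with probability $p_\ell^V\coloneqq\abs{\alpha_\ell}^2+\abs{\beta_\ell}^2$ on $\ket{V}$ and leaves the 1-qubit states $\ket{V'_\ell}=(\alpha_\ell\ket{0}+\beta_\ell\ket{1})/\sqrt{p_\ell^V}$ and $\ket{U'_\ell}=(\gamma_\ell\ket{0}+\delta_\ell\ket{1})/\sqrt{p_\ell^U}$. Direct evaluation of inner products (taking $\ket{(U'_\ell)^\perp}\propto\overline{\delta_\ell}\ket{0}-\overline{\gamma_\ell}\ket{1}$) and using $1/p_\ell^U = d$ gives
\begin{align*}
\Pr[\textsc{SubTest}\text{ accepts}] &= \sum_\ell p_\ell^V\,\abs{\braket{U'_\ell}{V'_\ell}}^2 = d\sum_\ell \abs{\overline{\gamma_\ell}\alpha_\ell+\overline{\delta_\ell}\beta_\ell}^2,\\
\Pr[\textsc{SubTest}\text{ rejects}] &= \sum_\ell p_\ell^V\,\abs{\braket{(U'_\ell)^\perp}{V'_\ell}}^2 = d\sum_\ell \abs{\delta_\ell\alpha_\ell-\gamma_\ell\beta_\ell}^2,
\end{align*}
where in each case the uniform factor $d$ is pulled out of the sum after cancellation of $p_\ell^V$.

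Finally I would identify each sum as the claimed squared norm. Since $\ket{U^1}=D\ket{U^0}$ for the diagonal $D=\sum_\ell(\omega^1_\ell/\omega^0_\ell)\ketbra{\ell}{\ell}$, the vector $\widetilde{v}^1=D^\dagger v^1$ has $\ell$-coordinate $(\omega^0_\ell/\omega^1_\ell)\beta_\ell$. Hence $(a^0v^0+a^1\widetilde{v}^1)_\ell = a^0\alpha_\ell+a^1(\omega^0_\ell/\omega^1_\ell)\beta_\ell$, and multiplying inside the modulus by the unit-modulus scalar $\overline{\omega^0_\ell}$ converts this to $a^0\overline{\omega^0_\ell}\alpha_\ell+a^1\overline{\omega^1_\ell}\beta_\ell=\sqrt d(\overline{\gamma_\ell}\alpha_\ell+\overline{\delta_\ell}\beta_\ell)$; squaring and summing in $\ell$ recovers \eqref{eqn:acc0}. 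An analogous manipulation, this time multiplying inside the modulus by $\omega^1_\ell$, turns $(a^1v^0-a^0\widetilde{v}^1)_\ell$ into $\sqrt d(\delta_\ell\alpha_\ell-\gamma_\ell\beta_\ell)$, giving \eqref{eqn:start}.

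There is no real conceptual obstacle; the only care needed is bookkeeping of the unit-modulus phases $\omega^0_\ell,\omega^1_\ell$ and of which factors are conjugated. The one pleasant observation is that the phase-state hypothesis is exactly what makes $p_\ell^U$ uniform, which is precisely what lets $1/p_\ell^U$ factor cleanly out of the sum and produce an honest squared norm rather than a weighted sum.
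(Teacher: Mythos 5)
Your proof is correct and follows essentially the same route as the paper: condition on the outcome $\ell$ of the $\calL$-measurement, use the phase-state hypothesis to write the conditional first-qubit state of $\ket{U}$ explicitly (which is exactly what makes the $1/p_\ell^U$ factor uniform), and recognize the resulting sum over $\ell$ as the claimed squared norm. The only immaterial difference is that you compute the rejection probability directly via the orthogonal state $\ket{(U'_\ell)^{\perp}}$, whereas the paper obtains \Cref{eqn:start} by verifying algebraically that the two squared norms in \Cref{eqn:acc0} and \Cref{eqn:start} sum to $1$.
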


\begin{proof}
We will first show \Cref{eqn:acc0}. Write $v^0= \sum_\ell v^0_\ell \ket{\ell}$ and similarly decompose $v^1$ and $\widetilde{v}^1$. Denote the first qubit of $\ket{V}$ conditioned on observing $\ket{\ell}$ by $ \ket{V_\ell}$, and the first qubit of $\ket{U}$ conditioned on observing $\ket{\ell}$ by $ \ket{U_\ell} \coloneqq a^0\ket{0} + a^1\zeta_\ell \ket{1}$, where $\zeta_\ell$ denotes the diagonal entry of $D$ corresponding to the basis vector $\ket{\ell}$. Then, the probability of the subtest  accepting is the expected fidelity of $\ket{U_\ell}$ and $\ket{V_\ell}$ upon measuring $\ket{V}$:
\[
     \mathop{\E}_\ell \abs{ \braket{U_\ell}{V_\ell} }^2 = \sum_\ell \abs{ \bra{U_\ell}(v^0_\ell\ket{0} + v^1_\ell\ket{1}) }^2 = \sum_\ell \abs{ a^0v^0_\ell + a^1 \overline{\zeta_\ell} v^1_\ell\ket{1}) }^2 =  \sum_\ell \abs{ a^0v^0_\ell + a^1 \widetilde{v}^1_\ell }^2,
\]
where the last equality uses that $\widetilde{v}^1_\ell = \overline{\zeta_\ell}v^1_\ell$. Using the Pythagorean theorem now yields the claimed \Cref{eqn:acc0}.

To show \Cref{eqn:start}, it suffices to prove that the claimed rejection probability is 1 minus the above. Indeed, by expanding, we see that
\begin{align*}
    \left\|a^0 v^0 + a^1 \widetilde{v}^1\right\|^2 + \left\|a^1 v^0 - a^0 \widetilde{v}^1\right\|^2 &= \left( (a^0)^2+(a^1)^2 \right) \left( \| v^0 \|^2 + \| \widetilde{v}^1 \|^2 \right) \\
    &\quad~~ + \left( a^0a^1-a^1a^0 \right) \left( v^{0\dag} \widetilde{v}^1 + \widetilde{v}^{1\dag} v^0 \right)\\
    &= 1,
\end{align*}
where the last line holds because both $(a^0)^2+(a^1)^2$ and $\| v^0 \|^2 + \| \widetilde{v}^1 \|^2$ are equal to 1.
\end{proof}

    We now use these expressions to establish \Cref{thm:subtest}. Note that we have $\ket{V^b} = \frac{v^b}{\|v^b\|}$ and $\ket{U^b} = \frac{u^b}{\norm{u^b}}$ for $b = 0, 1$.
    Thus
    \begin{align*}
        \abs{\braket{U}{V}}^2 
        &= \left|a^0\cdot \bra{U^0}v^0 + a^1\cdot \bra{U^1}v^1\right|^2 
        = \left|a^0\cdot \bra{U^0}v^0 + a^1\cdot \bra{U^0}\widetilde{v}^1\right|^2\\
        &= \left|\bra{U^0}\left(a^0 v^0 + a^1 \widetilde{v}^1\right)\right|^2
        \leq \left\|a^0 v^0+a^1 \widetilde{v}^1\right\|^2, \label{eqn:expre}
    \end{align*}
    where we used $\bra{U^1}v^1 = \bra{U^0}D^\dagger D \widetilde{v}^1 = \bra{U^0}\widetilde{v}^1$.
    In combination with \Cref{eqn:acc0}, this shows the subtest acceptance probability is at least the fidelity $\abs{\braket{U}{V}}^2$, confirming \Cref{eqn:accept}. 
    
    To show \Cref{eqn:reject} it will be helpful to re-express $\Delta(\ket{V} : \ket{U})$:
    \begin{proposition}
        It holds that
        \[
            \Delta(\ket{V} : \ket{U}) = \left|a^1 \bra{U^0}v^0 - a^0 \bra{U^1}v^1\right|^2
        \]
    \end{proposition}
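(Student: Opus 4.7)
The plan is a direct algebraic calculation: expand both $\Delta(\ket{V}:\ket{U})$ and the claimed right-hand side into a common form and match them using the normalization identity $(a^0)^2 + (a^1)^2 = 1$. I do not expect any genuine obstacle --- the only real trick is spotting the cancellation between the first-qubit measurement probability and the renormalization factor appearing in $\ket{V^b}$.

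Concretely, I would introduce the shorthand $\beta^b \coloneqq \bra{U^b} v^b$ for $b \in \{0,1\}$, so that $\braket{u^b}{v^b} = a^b\, \beta^b$ (because $\ket{U^b} = u^b/a^b$). This immediately yields the clean expressions
\[
    \braket{U}{V} \;=\; a^0 \beta^0 + a^1 \beta^1, \qquad \abs{\braket{U^b}{V^b}}^2 \;=\; \frac{\abs{\beta^b}^2}{\norm{v^b}^2}.
\]
Since measuring the first qubit of $\ket{V}$ in the computational basis yields outcome $b$ with probability exactly $\norm{v^b}^2$, the $\norm{v^b}^2$ factors cancel in the expectation and we obtain $\mathop{\E}_b \abs{\braket{U^b}{V^b}}^2 = \abs{\beta^0}^2 + \abs{\beta^1}^2$.

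It then remains to subtract $\abs{\braket{U}{V}}^2 = (a^0)^2 \abs{\beta^0}^2 + (a^1)^2 \abs{\beta^1}^2 + 2 a^0 a^1\, \mathrm{Re}(\beta^0 \overline{\beta^1})$ from this expectation. Invoking $1-(a^0)^2 = (a^1)^2$ and vice versa collapses the difference to $(a^1)^2 \abs{\beta^0}^2 + (a^0)^2 \abs{\beta^1}^2 - 2 a^0 a^1\, \mathrm{Re}(\beta^0 \overline{\beta^1})$, which is precisely $\abs{a^1 \beta^0 - a^0 \beta^1}^2$. Unpacking the definition of $\beta^b$ then gives the stated formula for $\Delta(\ket{V} : \ket{U})$.
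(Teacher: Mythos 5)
Your proof is correct and follows essentially the same route as the paper's: expand $\mathop{\E}_b\abs{\braket{U^b}{V^b}}^2$ (with the $\norm{v^b}^2$ cancellation), subtract $\abs{\braket{U}{V}}^2$, and use $(a^0)^2+(a^1)^2=1$ to complete the square into $\abs{a^1\beta^0 - a^0\beta^1}^2$. The $\beta^b$ shorthand is just a notational tidying of the paper's computation.
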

    \begin{proof}
        This follows by expanding:
        \begin{align*}
            \Delta(\ket{V} : \ket{U}) &= \|v^0\|^2 \cdot \left|\braket{U^0}{V^0}\right|^2 + \|v_1\|^2 \cdot \left|\braket{U^1}{V^1}\right|^2  
            - \left|\braket{U}{V}\right|^2 \\
            &= \left|\bra{U^0}v^0\right|^2 + \left|\bra{U^1}v^1\right|^2 - \left|a^0 \bra{U^0}v^0 + a^1 \bra{U^1}v^1\right|^2 \\
            &= \left( 1- (a^0)^2 \right) \left|\bra{U^0}v^0\right|^2 + \left( 1- (a^1)^2 \right) \left|\bra{U^1}v^1\right|^2 - 2a^0a^1\Re \left( v^0\ket{U^0}\bra{U^1}v^1 \right) \\
            &= (a^1)^2\left|\bra{U^0}v^0\right|^2 + (a^0)^2 \left|\bra{U^1}v^1\right|^2 - 2a^0a^1\Re \left( v^0\ket{U^0}\bra{U^1}v^1 \right) \\
            &=\left|a^1 \bra{U^0}v^0 - a^0 \bra{U^1}v^1\right|^2 \qedhere
        \end{align*}
    \end{proof}
    Then, using that $\bra{U^1}v^1 = \bra{U^0}\widetilde{v}^1$ gives
    \[
        \Delta(\ket{V} : \ket{U}) = \left|a^1 \bra{U^0}v^0 - a^0 \bra{U^1}v^1\right|^2 = \left|\bra{U^0}\left(a^1 v^0-a^0 \widetilde{v}^1\right)\right|^2 \leq \left\|a^1 v^0 - a^0 \widetilde{v}^1\right\|^2.
    \]
    In combination with \Cref{eqn:start}, this shows the subtest rejection probability is at least $\Delta(\ket{V} : \ket{U})$, confirming \Cref{eqn:reject}.

\subsection{Access Model and Computational Efficiency}\label{sec:access model}
Thus far we have not constrained how the target state $\ket{\tar}$ is represented classically and granted ourselves unlimited classical computation to compute any properties of it. Hence, while our procedure is quantum-efficient, it is not yet classically efficient.

Because an explicit classical description of $\ket{\tar}$ can be exponentially large, we should not allow ourselves to work with a full description of $\ket{\tar}$. Instead, we work in an \emph{oracle-access model} mirroring Huang et al.~\cite{huang2024certifying}. In that work, the oracle provides the value of $\braket{x}{\tar}$ for any queried computational‑basis string $x\in\{0,1\}^{n}$.  We adopt a natural extension: for any sequence of single-qubit measurements and outcomes, the oracle tells us the probability of obtaining that specific outcome sequence:

\begin{definition}\label{def:access model}
We consider oracle access to a state
$\ket{\psi}$  supporting the following type of query: For any operator $\Pi=\Pi_1\otimes\dots\otimes \Pi_n$ that is a tensor product of single-qubit projectors on $\mathbb{C}^{2}$, the querying with $\Pi$ returns the value $\bra{\psi} \Pi \ket{\psi}$. Note that it is possible for individual tensor factors to be $\Pi_k=\Id$.
\end{definition}

It is useful to observe, for example, that if we had a description of $\ket{\tar}$ as a matrix product state of small bond dimension, we could efficiently emulate the oracle.

Now, we will show that this access model allows us to adaptively compute a DT basis as in \Cref{cor:perpy} and implement the corresponding measurements efficiently.

\begin{lemma}\label{lem:computational efficiency}
    Given access to $\ket{\smash{\psi^{ 0}}}$ and $\ket{\smash{\psi^{ 1}}}$ via this model, we can implement measuring in the DT basis defined in \Cref{cor:perpy} (making $\ket{\smash{\psi^0}}$ and $\ket{\smash{\psi^1}}$ phase states), using $O(n)$ time and single-qubit measurements.
\end{lemma}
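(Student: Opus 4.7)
The plan is to build the DT basis $\calT$ incrementally \emph{during} the adaptive measurement rather than materializing the (exponentially large) whole tree up front. Let $\ket{\psi^0}, \ket{\psi^1}$ be the two $n$-qubit states to which we have oracle access. At step $k \in [n]$, after having measured qubits $1, \dots, k-1$ with outcomes encoding a path $w \in \{0,1\}^{k-1}$, I need to compute the $1$-qubit basis $\mathcal{B}_w = \{\ket{b_w}, \ket{b_w^\bot}\}$ labeling the edges out of node $w$, and then use it to measure qubit $k$. By the inductive construction in \Cref{cor:perpy}, $\mathcal{B}_w$ is obtained by applying \Cref{lem:perpy} to the single-qubit reduced density matrices $\rho_k^{0,w}, \rho_k^{1,w}$ of qubit $k$ in $\ket{\psi^0}, \ket{\psi^1}$ conditioned on observing $w$ on qubits $1, \dots, k-1$.

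The first step is to show that $\rho_k^{i,w}$ can be recovered with $O(1)$ oracle queries. Since a $1$-qubit density matrix is determined by the three Bloch coordinates $\Tr(P\rho)$ for $P \in \{X,Y,Z\}$, it suffices to compute these. Let $\Pi^w_j$ denote the rank-$1$ single-qubit projector on qubit $j$ onto the basis vector of $\mathcal{B}_{w_1\dots w_{j-1}}$ selected by $w_j$, and let $p_w := \bra{\psi^i}\, \Pi^w_1 \otimes \cdots \otimes \Pi^w_{k-1} \otimes \Id^{\otimes(n-k+1)}\, \ket{\psi^i}$. Then
\[
    \Tr(P\, \rho_k^{i,w}) \;=\; \frac{1}{p_w}\, \bra{\psi^i}\, \Pi^w_1 \otimes \cdots \otimes \Pi^w_{k-1} \otimes P \otimes \Id^{\otimes(n-k)} \,\ket{\psi^i}.
\]
Each Pauli decomposes as a difference of two rank-$1$ single-qubit projectors (e.g.\ $Z = \ketbra{0}{0} - \ketbra{1}{1}$), so the right-hand side reduces to a constant number of queries of exactly the form permitted by \Cref{def:access model}, each a tensor product of single-qubit projectors and identities. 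The normalization $p_w$ is one additional such query.

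Once the Bloch vectors of $\rho_k^{0,w}$ and $\rho_k^{1,w}$ are in hand, the proof of \Cref{lem:perpy} gives an explicit construction of $\ket{b_w}$: take any unit Bloch vector orthogonal to both (computable by a cross product, with an arbitrary tiebreak in the degenerate collinear case), and form $\mathcal{B}_w = \{\ket{b_w}, \ket{b_w^\bot}\}$ from its two antipodes. Then I would physically measure qubit $k$ in $\mathcal{B}_w$ and append the outcome to $w$. Summing over the $n$ steps yields $O(n)$ oracle queries, $O(n)$ classical arithmetic, and exactly $n$ single-qubit measurements, as claimed.

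I do not expect a genuine obstacle here; the key conceptual point is that adaptivity lets us lazily materialize only the $n$ bases lying on the realized measurement path, so the exponential size of the full tree $\calT$ is irrelevant. The only small bookkeeping to check is that on any realized path one has $p_w > 0$ so that the division above is well-defined --- but this holds tautologically, since $w$ was produced as a sequence of measurement outcomes occurring with positive probability.
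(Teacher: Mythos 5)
Your proposal is correct and follows essentially the same approach as the paper: lazily construct only the bases along the realized measurement path, recovering each conditional single-qubit reduced state via $O(1)$ oracle queries with tensor products of projectors (the paper queries projectors onto $\ket{0},\ket{+},\ket{\mathfrak{i}}$ rather than Pauli expectations, which is the same tomographic information), then apply \Cref{lem:perpy} and measure. No substantive differences.
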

\begin{proof}
    The algorithm will initialize $t=0$, $\ket{\smash{\psi^0_t}}=\ket{\smash{\psi^0}}$, $\ket{\smash{\psi^1_t}}=\ket{\smash{\psi^1}}$, and an initially empty tensor product of projections $\Pi_t=1$ that corresponds to the measurement outcomes observed up to the $t$th single-qubit measurement. While $t<n$, we compute the reduced states $\rho^0_t = \Tr_{[n]\setminus \{t+1\}}(\ketbra{\smash{\psi^0_t}}{\smash{\psi^0_t}})$ and $\rho^0_t = \Tr_{[n]\setminus \{t+1\}}(\ketbra{\smash{\psi^1_t}}{\smash{\psi^1_t}})$. This can be done by querying the oracle for the quantities
    \begin{align*}
        \bra{\smash{\psi^0}}\Pi\ket{\smash{\psi^0}}\text{ and }\bra{\smash{\psi^1}}\Pi\ket{\smash{\psi^1}},\text{ where }\Pi = \Pi_t \otimes \ketbra{b}{b} \otimes \Id^{\otimes (n - t - 1)}.
    \end{align*}
    for $b\in\{0,+,\mathfrak{i}\}$ and performing a single-qubit quantum state reconstruction algorithm. 
    
    As in \Cref{cor:perpy}, let $\ket{e_t}$ be perpendicular to both $\rho^0$ and $\rho^1$ on the Bloch sphere and measure the the $t$th qubit in the basis $\ket{e_t},\ket{\smash{e_t^\perp}}$. Then set $\Pi_{t+1}=\Pi_t \otimes\ketbra{b}{b}\otimes\Id$, where $\ket{b}\in \{\ket{e_t},\ket{\smash{e^\perp_t}}\}$ was the outcome of this measurement. Then $\ket{\smash{\psi^0_{t+1}}}\propto\Pi_{t+1}\ket{\smash{\psi^0}}$ and $\ket{\smash{\psi^1_{t+1}}}\propto\Pi_{t+1}\ket{\smash{\psi^1}}$ and we increase $t$ by $1$. Repeating for all $t$ gives a measurement outcome in a DT basis satisfying the conclusion of \Cref{cor:perpy}. Computing the $t$th measurement and updating the query take $O(1)$ time, so the overall runtime is $O(n)$.
\end{proof}
Given \Cref{lem:computational efficiency}, we see that \Cref{alg:main} can be implemented in time $O(n)$ under the access model of \Cref{def:access model}. For measurement in the decision tree basis $\calT_x$ (where $x$ is the outcome of measuring the first $k-1$ qubits), we use \Cref{lem:computational efficiency} with $\ket{\psi^0}=\ket{\tar^{x0}}$ and $\ket{\psi^1}=\ket{\tar^{x1}}$.

\section{Lower Bound}\label{sec:lower bound}
In this section, we prove \Cref{prop:lower bound}.
A simple geometric fact is at heart of our lower bound proof:
\begin{claim}\label{claim:uncertainty one particle}
    Set $\ket{\chi_0}, \ket{\chi_1}, \ket{\chi_2}, \ket{\chi_3}$ to be as in the SIC-POVM. Then for all $1$-qubit bases $\{\ket{\phi}, \ket{\phi^\bot}\}$, there exists $b \in \{1,2,3\}$ such that
    \begin{equation*}
        \abs{\braket{\phi}{\chi_0}\braket{\chi_b}{\phi}}+ \abs{\braket{\smash{\phi^\perp}}{\chi_0}\braket{\chi_b}{\smash{\phi^\perp}}} \leq 0.99
    \end{equation*}
    for at least one of $b\in\{1,2,3\}$.
\end{claim}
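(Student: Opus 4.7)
The plan is to reduce the claim to a straightforward geometric fact on the Bloch sphere. I would fix coordinates so that $\vec{\chi}_0 = (0,0,1)$ and the remaining three states $\vec{\chi}_1, \vec{\chi}_2, \vec{\chi}_3$ sit symmetrically at height $z = -1/3$ on a circle of radius $2\sqrt{2}/3$, so that all four Bloch vectors form a regular tetrahedron (the defining property of a qubit SIC-POVM).

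First, I would put the summands in trigonometric form. Setting $\theta \in [0, \pi/2]$ so that $|\braket{\phi}{\chi_0}| = \cos\theta$ and $|\braket{\phi^\bot}{\chi_0}| = \sin\theta$, and analogously $|\braket{\phi}{\chi_b}| = \cos\omega_b$, $|\braket{\phi^\bot}{\chi_b}| = \sin\omega_b$, the quantity of interest simplifies to
\[
\alpha_b \;:=\; |\braket{\phi}{\chi_0}\braket{\chi_b}{\phi}| + |\braket{\phi^\bot}{\chi_0}\braket{\chi_b}{\phi^\bot}| \;=\; \cos(\theta - \omega_b).
\]
Moreover, under this parameterization $\vec{\phi}\cdot\vec{\chi}_0 = \cos 2\theta$ and $\vec{\phi}\cdot\vec{\chi}_b = \cos 2\omega_b$, so the parameters $\theta, \omega_b$ directly encode dot products on the Bloch sphere.

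Second, I would argue by contradiction: suppose $\alpha_b > 0.99$ for all $b \in \{1,2,3\}$. Then $|\theta - \omega_b| < \arccos(0.99) < 1/3$, and the $1$-Lipschitz property of cosine gives $|\vec{\phi} \cdot (\vec{\chi}_0 - \vec{\chi}_b)| = |\cos 2\theta - \cos 2\omega_b| < 2/3$ for each $b$. Let $M$ be the $3\times 3$ matrix whose rows are the differences $\vec{\chi}_0 - \vec{\chi}_b$. On the one hand, the displayed bound yields $\|M\vec\phi\|^2 < 3 \cdot (2/3)^2 = 4/3$. On the other hand, the tetrahedral symmetry makes a direct computation short: $M^\top M = \mathrm{diag}(4/3, 4/3, 16/3)$, so $\sigma_{\min}(M)^2 = 4/3$ and hence $\|M\vec\phi\|^2 \ge 4/3$. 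These two bounds contradict one another.

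The main obstacle is really just the explicit Bloch-sphere computation of $\sigma_{\min}(M)$, but the high symmetry of the tetrahedron keeps it clean. A purely qualitative alternative is even easier: by Cauchy--Schwarz, $\alpha_b = 1$ forces $\vec\phi \perp (\vec\chi_0 - \vec\chi_b)$, and since the three differences span $\mathbb{R}^3$ no unit $\vec\phi$ can achieve $\alpha_1 = \alpha_2 = \alpha_3 = 1$, so compactness alone yields $\max_{\vec\phi}\min_b \alpha_b < 1$. The constant $0.99$ in the statement is in fact quite slack---the quantitative argument above yields the sharper bound $\cos(1/3) \approx 0.945$.
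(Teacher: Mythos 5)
Your proof is correct and rests on the same geometric idea as the paper's one-line argument (the SIC Bloch vectors form a non-degenerate tetrahedron, so no Bloch vector $\vec{\phi}$ can be nearly orthogonal to all three differences $\vec{\chi}_0-\vec{\chi}_b$ simultaneously). If anything, your write-up is more complete than the paper's: the paper simply asserts that the vectors are ``far from coplanar, so the result follows,'' whereas you supply the quantitative content --- the identity $\alpha_b=\cos(\theta-\omega_b)$ and the computation $M^\top M=\mathrm{diag}(4/3,4/3,16/3)$ --- that actually certifies the constant $0.99$ (and indeed the sharper bound $\cos(1/3)\approx 0.945$).
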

\begin{proof}
    The SIC-POVM vectors form a tetrahedron on the Bloch sphere. These vectors are far from coplanar, so the result follows.
\end{proof}

\subsection{The Hard-to-Certify State}
\begin{definition}
    For $C \coloneqq (c^1\ldots c^N) \in \{0,1,2,3\}^n$, define the vector 
    \begin{align*}
        v_C\coloneq& \frac1{\sqrt{N}} \sum_{t\in[N]} \ket{\smash{\chi_{c^t_1}}}\otimes\dots\otimes \ket{\smash{\chi_{c^t_n}}}
    \end{align*}
    Define $\ket{\psi_C}$ to be the quantum state in the direction of $v_C$, and let $\ket{\psi_{c^t}}\coloneq \ket{\smash{\chi_{c^t_1}}}\otimes\dots\otimes \ket{\smash{\chi_{c^t_n}}}$.
\end{definition}

\begin{lemma}\label{lem:almost state}
    Let $N=\floor{\smash{2^{10^{-10}n}}}$ and choose $C \coloneqq (c^1\ldots c^N )\in \{0,1,2,3\}^n$ i.i.d. With probability at least 0.9,
    \[
        1 - 2^{-0.1n} \leq \norm{v_C}^2 \leq 1 + 2^{-0.1n}.
    \]
\end{lemma}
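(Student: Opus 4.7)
The plan is a mean-plus-variance analysis of $\|v_C\|^2$, concluded by Chebyshev's inequality. Expanding,
\[
    \|v_C\|^2 = 1 + \frac{1}{N}\sum_{t\neq s}\langle\psi_{c^t}|\psi_{c^s}\rangle,
\]
so the task reduces to controlling the off-diagonal sum via the i.i.d.\ tensor structure of $\ket{\psi_{c^t}} = \ket{\chi_{c^t_1}}\otimes\cdots\otimes\ket{\chi_{c^t_n}}$. For independent $t\neq s$, the expectation factors as $\mathbb{E}[\langle\psi_{c^t}|\psi_{c^s}\rangle] = \lambda^n$, where $\ket{\mu}\coloneqq\tfrac14\sum_a\ket{\chi_a}$ and $\lambda\coloneqq\|\ket{\mu}\|^2$. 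Using $|\langle\chi_a|\chi_b\rangle|=1/\sqrt{3}$ for $a\neq b$ (a SIC-POVM property), a triangle-inequality bound gives $\|\sum_a\ket{\chi_a}\|^2 \leq 4+4\sqrt{3}$ and hence $\lambda\leq (1+\sqrt{3})/4 < 1$. Therefore $|\mathbb{E}[\|v_C\|^2]-1|\leq N\lambda^n$, which is $2^{-\Omega(n)}$ since $N=\lfloor 2^{10^{-10}n}\rfloor$.

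The bulk of the work is the variance bound. Define $Y_{t,s}\coloneqq\langle\psi_{c^t}|\psi_{c^s}\rangle - \lambda^n$, which has mean $0$ for $t\neq s$; then
\[
    \mathrm{Var}(\|v_C\|^2) = \frac{1}{N^2}\sum_{t\neq s,\; t'\neq s'}\mathbb{E}\bigl[Y_{t,s}\,Y_{t',s'}\bigr].
\]
The plan is to split this sum by the overlap of $\{t,s\}$ with $\{t',s'\}$. When they are disjoint, $Y_{t,s}$ and $Y_{t',s'}$ are independent and mean-zero, contributing $0$. When they coincide as sets, the SIC-POVM second moment $\mathbb{E}[|\langle\psi_{c^t}|\psi_{c^s}\rangle|^2]=2^{-n}$ (from the 1-design identity $\mathbb{E}_a[\ket{\chi_a}\bra{\chi_a}]=I/2$) bounds each term, contributing $O(2^{-n})$ in total. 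The crucial case is when exactly one index is shared: conditioning on the shared random outcome $c^t$ and noting that $\mathbb{E}_{c^s}\ket{\psi_{c^s}}=\ket{\mu}^{\otimes n}$ is a vector of norm $\lambda^{n/2}$, one bounds each such term by $O(\lambda^n)$, summing to $O(N\lambda^n)$. Altogether $\mathrm{Var}(\|v_C\|^2)\leq 2^{-\Omega(n)}$.

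Chebyshev's inequality at tolerance $\tfrac12\cdot 2^{-0.1n}$, combined with the negligible bias from the first step, then yields $\Pr[|\|v_C\|^2-1|\geq 2^{-0.1n}]\leq 0.1$ for all $n$ beyond some absolute constant. (For smaller $n$, the floor gives $N=1$, in which case $v_C$ is a unit vector and the claim is trivial.) The main obstacle is the one-shared-index covariance, which hinges on the quantitative bound $\lambda<1$ coming from the SIC-POVM geometry; everything else is routine case-counting and a standard Chebyshev application.
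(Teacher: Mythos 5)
Your proof is correct, but it takes a genuinely different route from the paper's. The paper conditions on a single high-probability combinatorial event --- that every pair $c^s \neq c^t$ has Hamming distance at least $0.1n$ (a union bound over the $\binom{N}{2}$ pairs) --- and then bounds each cross term $\frac1N\abs{\braket{\smash{\psi_{c^s}}}{\smash{\psi_{c^t}}}}$ deterministically by $\frac1N\cdot\big(\tfrac{1}{\sqrt3}\big)^{0.1n}$ using the SIC-POVM overlap, so that \emph{all} off-diagonal terms are simultaneously tiny. You instead run a second-moment argument on the scalar $\norm{v_C}^2$ itself: the mean computation via $\lambda=\norm{\mu}^2\leq(1+\sqrt3)/4<1$, the three-way split of the covariance sum by index overlap (with the disjoint case vanishing by independence, the coincident case controlled by the $1$-design identity giving $\E\abs{\braket{\smash{\psi_{c^t}}}{\smash{\psi_{c^s}}}}^2=2^{-n}$, and the one-shared-index case controlled by Cauchy--Schwarz against $\ket{\mu}^{\otimes n}$), and Chebyshev all check out, and the resulting variance $O(2^{-n})+O(N\lambda^n)$ is indeed far below the required $2^{-0.2n}$ given $N=\lfloor 2^{10^{-10}n}\rfloor$; your handling of small $n$ via $N=1$ is also fine. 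The trade-off: your argument is more mechanical and self-contained (no appeal to a coding-theory distance bound), but it only controls the aggregate sum of cross terms, whereas the paper's pairwise-distance event is reused later (e.g.\ to show $\bra{\psi_C}\rho_C\ket{\psi_C}\leq 2^{-\Omega(n)}$ in the proof of the lower bound), so if you adopted your proof wholesale you would need a separate argument for that step. As a minor aside, your value $\abs{\braket{\chi_a}{\chi_b}}=1/\sqrt3$ is the correct SIC-POVM overlap; the paper's stated bound of $\tfrac13$ appears to conflate the overlap with its square, though this does not affect either proof.
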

\begin{proof}
    By a standard coding theory argument, we have that with probability at least $0.9$, for all $s\neq t$, it holds that $c^s$ and $c^t$ have distance at least $0.1n$. For any $i$ such that $c^s_i \neq c^t_i$, it holds that $\big|\braket{\smash{\chi_{c^s_i}}}{\smash{\chi_{c^t_i}}} \big| \leq \frac13$. Then, when we evaluate $\| v_C \|^2$, each of the $\binom{N}{2}$ cross terms is at most $\frac{1}{N}\cdot \big( \frac{1}{3} \big)^{0.1n}$. As such, the total contribution of the cross terms is at most $\binom{N}{2} \cdot \frac{1}{N}\cdot \big( \frac{1}{3} \big)^{0.1n} \leq 2^{-0.1n}$, and moreover, the diagonal terms sum to $1$.
\end{proof}

Therefore, $v_C$ is extremely close to being a quantum state $\ket{\psi_C}$. In the next section, we will show that it is also likely that $\ketbra{\psi_C}{\psi_C}$ is indistinguishable from the mixed state
\begin{align*}
    \rho_C&\coloneq \frac1{N}\sum_{s}\ketbra{\psi_{c^s}}{\psi_{c^s}}.
\end{align*}

\subsection{The Cross Term Operators}

We will show that the advantage of distinguishing between $\ketbra{\psi_C}{\psi_C}$ and the mixed state $\rho_C$ is related to sum of the absolute values on the diagonal of the operator $\frac1{N}\sum_{s\neq t}\ketbra{\psi_{c^s}}{\psi_{c^t}}$, maximized over the set of product bases for $(\C^2)^{\otimes n}$. More formally we have:

\begin{lemma}\label{lem:distinguishing probability operator diff meghal}
    Let $\ket{\phi_x}$ with $x\in\{0,1\}^n$ form an orthonormal basis for $(\C^2)^{\otimes n}$. Suppose that $C$ of size $N=\floor{\smash{2^{10^{-10}n}}}$ satisfies the normalization condition of \Cref{lem:almost state}. Then for large enough $n$,
    \begin{align*}
        d_{\mathrm{TV}}(\calD_1,\calD_2) &\leq \frac1{N}\sum_{x}\abs{\sum_{s\neq t}\braket{\phi_x}{\smash{\psi_{c^s}}}\braket{\smash{\psi_{c^t}}}{\phi_x}} + 2^{-0.00 01n}.
    \end{align*}
\end{lemma}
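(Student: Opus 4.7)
\medskip
\noindent\textbf{Proof plan.} The distributions $\calD_1$ and $\calD_2$ are the classical distributions on $\{0,1\}^n$ obtained by measuring $\ketbra{\psi_C}{\psi_C}$ and $\rho_C$ respectively in the basis $\{\ket{\phi_x}\}$; that is, $\calD_1(x)=|\braket{\phi_x}{\psi_C}|^2$ and $\calD_2(x) = \bra{\phi_x}\rho_C\ket{\phi_x} = \frac{1}{N}\sum_{s}|\braket{\phi_x}{\smash{\psi_{c^s}}}|^2$. The strategy is to work with the unnormalized vector $v_C$ (where the combinatorics is transparent) instead of $\ket{\psi_C}$, isolate the ``diagonal'' part that reproduces $\rho_C$ exactly, and absorb the small discrepancy between $\|v_C\|^2$ and~$1$ into the additive slack term.

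First I would expand the squared overlap with $v_C$ as
\[
    |\braket{\phi_x}{v_C}|^2 = \frac{1}{N}\sum_{s,t}\braket{\phi_x}{\smash{\psi_{c^s}}}\braket{\smash{\psi_{c^t}}}{\phi_x} = \frac{1}{N}\sum_{s}|\braket{\phi_x}{\smash{\psi_{c^s}}}|^2 + \frac{1}{N}\sum_{s\neq t}\braket{\phi_x}{\smash{\psi_{c^s}}}\braket{\smash{\psi_{c^t}}}{\phi_x}.
\]
The first sum on the right is exactly $\calD_2(x)$, so $|\braket{\phi_x}{v_C}|^2-\calD_2(x)$ equals $\frac{1}{N}$ times the cross-term sum. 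Taking absolute values, summing over $x$, and pulling the absolute value inside by the triangle inequality already produces exactly the cross-term expression appearing on the right-hand side of the lemma.

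Next, since $\calD_1(x) = |\braket{\phi_x}{v_C}|^2/\|v_C\|^2$, I would use the triangle inequality in the form
\[
    |\calD_1(x)-\calD_2(x)| \;\leq\; \bigl||\braket{\phi_x}{v_C}|^2 - \calD_2(x)\bigr| \;+\; |\braket{\phi_x}{v_C}|^2\cdot\bigl|\tfrac{1}{\|v_C\|^2}-1\bigr|.
\]
By \Cref{lem:almost state}, $\bigl|\tfrac{1}{\|v_C\|^2}-1\bigr|\leq 3\cdot 2^{-0.1n}$ for large $n$, and $\sum_x |\braket{\phi_x}{v_C}|^2 = \|v_C\|^2 \leq 1+2^{-0.1n}$. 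Summing the correction term over $x$ therefore contributes at most $O(2^{-0.1n})$, which is comfortably absorbed into the claimed slack $2^{-0.0001n}$. The factor $\tfrac12$ in the definition of $d_{\mathrm{TV}}$ is lost harmlessly (the lemma's coefficient $\tfrac1N$ in front of the cross-term sum is strictly weaker than the $\tfrac{1}{2N}$ that the argument actually delivers).

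The proof is essentially bookkeeping, and the only step requiring care is tracking the normalization error so that it lands beneath the stated slack. The genuine work -- showing that the cross-term sum on the right is itself small for a typical random $C$, using the SIC geometry from \Cref{claim:uncertainty one particle} and a product-basis structure on $\{\ket{\phi_x}\}$ -- is deferred to subsequent lemmas; this lemma is purely the linear-algebraic reduction from a quantum distinguishing task to a combinatorial quantity on the off-diagonal entries.
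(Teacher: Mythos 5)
Your proposal is correct and follows essentially the same route as the paper: split $v_Cv_C^\dag$ into its diagonal part (which is exactly $\rho_C$) and the cross terms, then absorb the $\|v_C\|^2\approx 1$ normalization discrepancy from \Cref{lem:almost state} into the additive slack. The only cosmetic difference is that you track the error pointwise in the amplitudes while the paper phrases the same step as a spectral-norm approximation of $\ketbra{\psi_C}{\psi_C}$ by $v_Cv_C^\dag$; your bookkeeping of the normalization term is, if anything, slightly more explicit.
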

\begin{proof}
    We directly compute
    \begin{align*}
        2d_{\mathrm{TV}}(\calD_1,\calD_2) &= \sum_x\abs{\braket{\phi_x}{\psi_C}\braket{\psi_C}{\phi_x}-\bra{\phi_x}\rho_C\ket{\phi_x}}=\sum_x\abs{{\bra{\phi_x}(\ketbra{\psi_C}{\psi_C}-\rho_C)\ket{\phi_x}}}.
    \end{align*}
    By \Cref{lem:almost state}, we can approximate the matrix in the middle up to spectral norm:
    \begin{align*}
        \ketbra{\psi_C}{\psi_C}-\rho_C \underset{N/2^{0.001n}}{\approx}v_Cv_C^\dag -\rho_C=\frac1{N}\sum_{s,t}\ketbra{\smash{\psi_{c^s}}}{\smash{\psi_{c^t}}}-\frac1{N}\sum_{s}\ketbra{\smash{\psi_{c^s}}}{\smash{\psi_{c^s}}}=\frac1{N}\sum_{s\neq t}\ketbra{\smash{\psi_{c^s}}}{\smash{\psi_{c^t}}}.
    \end{align*}
    Plugging in the value of $N$, we conclude the result when $n$ is large enough.
\end{proof}

For $S\subseteq[n]$, we say that $\{\ket{\phi_x}\}$ form an $\ell$-adaptive orthonormal basis for $(\C^2)^{\otimes n}$ if there exists $S\subseteq[n]$ of size $\leq\ell$ such that each basis element $\ket{\phi_x}$ takes the form 
\begin{align*}
    \ket{\phi_x} &= \bigotimes_{i\not\in S}\ket{\smash{\phi^{i}_{x_i}}} \otimes \ket{\phi_{x_S}}. 
\end{align*}
Measurement in such a basis can be done by measuring all of the qubits not in $S$ in a non-adaptive manner, and then measuring the qubits in $S$ in some basis depending on the outcome of the first $n-\ell$ measurements.

To show that indeed a random $C$ will give a hard to distinguish state, we show the following:
\begin{lemma}\label{lem:lower bound from concentration}
    Let $C\coloneq (c^1\ldots c^N)$, where each $c^s$ is chosen independently  and uniformly at random from  $\{0,1,2,3\}^n$, and where $N\coloneq \floor{\smash{2^{10^{-10}n}}}$. Then if $n$ is large enough, with probability at least 0.9 we have for all $10^{-8}n$-adaptive bases $\{\ket{\phi_x}\}$ that
    \begin{align*}
        \sum_{x\in\{0,1\}^n}\abs{\sum_{s\neq t\in[N]}\braket{\phi_x}{\smash{\psi_{c^s}}}\braket{\smash{\psi_{c^t}}}{\phi_x}}\leq 5.
    \end{align*}
\end{lemma}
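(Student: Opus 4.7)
The plan is to bound $\sum_x|X_x|$ (writing $X_x:=\sum_{s\ne t}\braket{\phi_x}{\psi_{c^s}}\braket{\psi_{c^t}}{\phi_x}$) first for a single fixed $\ell$-adaptive basis by a first-moment argument, and then extend to all $\ell$-adaptive bases by a net-plus-union-bound argument. The crucial structural observation is that the adaptive part of the basis (on $S$) will contribute only a factor of $1$ by Cauchy--Schwarz, so effectively one need only control the non-adaptive single-qubit bases on $[n]\setminus S$.

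First, I would apply the triangle inequality to get $\sum_x|X_x|\le \sum_{s\ne t}\sum_x|\alpha_s^x\bar\alpha_t^x|$, where $\alpha_s^x=\braket{\phi_x}{\psi_{c^s}}$. Writing $\alpha_s^x=\beta_s^{x_{[n]\setminus S}}\gamma_s^{x_S}$ along the adaptive split, for each fixed $x_{[n]\setminus S}$ the sum $\sum_{x_S}|\gamma_s^{x_S}\bar\gamma_t^{x_S}|\le 1$ by Cauchy--Schwarz applied to the orthonormal basis $\{\ket{\phi_{x_S}(x_{[n]\setminus S})}\}_{x_S}$ of $(\C^2)^{\otimes S}$, while the non-adaptive part factors by qubit. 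This yields
\[\sum_x|\alpha_s^x\bar\alpha_t^x|\le \prod_{i\notin S}g_i(c^s_i,c^t_i),\qquad g_i(a,b):=\sum_{x_i\in\{0,1\}}|\braket{\phi^i_{x_i}}{\chi_a}||\braket{\chi_b}{\phi^i_{x_i}}|,\]
reducing the problem to controlling $\sum_{s\ne t}\prod_{i\notin S}g_i(c^s_i,c^t_i)$.

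Taking expectation over $C$, the coordinate-wise independence of the $c^s_i$ factors this into $N(N-1)\prod_{i\notin S}\bar g_i$, where $\bar g_i=\frac{1}{16}\sum_{a,b}g_i(a,b)$. Each $\bar g_i$ is strictly less than $1$: $g_i(a,a)=1$ by completeness, while \Cref{claim:uncertainty one particle}, together with the symmetry of $g_i$ in its two arguments, forces $g_i(a,b)\le 0.99$ on at least one off-diagonal pair, so $\bar g_i\le 1-\delta_0$ for a universal constant $\delta_0>0$. Since $\ell=10^{-8}n\ll n$ and $N=2^{10^{-10}n}$, the expectation is at most $N^2(1-\delta_0)^{n-\ell}=2^{-\Omega(n)}$, and Markov's inequality bounds the single-basis failure probability by $2^{-\Omega(n)}$.

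For the uniform-in-basis conclusion I would union bound over a discretization: since the bound above depends only on the choice of $S$ (at most $\binom{n}{\ell}$ options) and on the non-adaptive single-qubit bases on $[n]\setminus S$, an $\epsilon$-net of single-qubit bases on the Bloch sphere, combined with the $O(n)$-Lipschitz continuity of $\prod g_i$ in those bases, should transfer the single-basis conclusion to all $\ell$-adaptive bases. The main obstacle I anticipate is balancing the net size against the per-basis failure probability: a naive net on single-qubit bases is exponential in $n$, comparable to the Markov bound, so simply combining the two will not suffice. Closing this gap will likely require a stronger concentration inequality for $\sum_{s\ne t}\prod_{i\notin S}g_i$---for example a Bernstein-type or higher-moment bound exploiting the non-negativity and $[0,1]$-boundedness of each summand $\prod g_i$---to obtain a super-exponentially small per-basis failure probability that easily defeats the net size.
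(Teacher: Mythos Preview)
Your opening reduction is correct and matches the paper exactly: pull the absolute value inside, use Cauchy--Schwarz on the adaptive coordinates to collapse the $x_S$ sum to $1$, and arrive at
\[
\sum_{x}\Bigl|\sum_{s\ne t}\braket{\phi_x}{\psi_{c^s}}\braket{\psi_{c^t}}{\phi_x}\Bigr|
\le
\sum_{s\ne t}\prod_{i\notin S} g_i(c^s_i,c^t_i).
\]
From here, however, the paper and your proposal diverge, and your route has a genuine gap that the Bernstein fix you suggest will not close.

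The difficulty is not the per-basis tail, but the adversarial choice of basis \emph{after} $C$ is revealed. For any \emph{single} pair $(s,t)$ and any realization of $C$, the adversary can pick each $\ket{\phi^i_0}$ equidistant on the Bloch sphere from $\ket{\chi_{c^s_i}}$ and $\ket{\chi_{c^t_i}}$, forcing $g_i(c^s_i,c^t_i)=1$ for every $i$ and hence $\prod_i g_i=1$. So no amount of concentration for a fixed basis transfers to a uniform-in-basis statement without a net, and the net is the problem: to make $\sum_{s\ne t}\prod_i g_i$ Lipschitz to within $O(1)$ you need $\epsilon$ on the order of $1/(nN^2)$, which makes the net size $2^{\Theta(n\log(nN^2))}$, far beyond any $2^{-\Theta(\delta_0 n)}$ tail you can extract from Bernstein/higher moments (here $\delta_0$ is tiny, coming from a single off-diagonal $g_i\le 0.99$ in the $16$-term average). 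Even with an idealized concentration bound the exponents do not line up.

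The paper sidesteps the net entirely by swapping the order of quantifiers. It calls a pair $(s,t)$ \emph{bad} for a basis if $\{c^s_i,c^t_i\}=\{0,b(\phi^i)\}$ on at least $10^{-6}n$ coordinates $i\notin S$ (so its product is $\le 0.99^{10^{-6}n}$), and then proves a purely combinatorial fact about $C$: for any \emph{quintuple} of distinct pairs, with probability $\ge 1-2^{-10^{-5}n}$ over $C$, at least one of the five pairs is bad \emph{for every} $10^{-8}n$-adaptive basis simultaneously. The point is that five edges on $[N]$ always contain enough structure (a degree-$3$ vertex or disjoint edges/cycles) that one can label a coordinate so that the five pairs realize all three sets $\{0,1\},\{0,2\},\{0,3\}$; whatever $b(\phi^i)$ the adversary picks, one pair is hit. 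A union bound over the $\le N^{10}=2^{10\cdot 10^{-10}n}$ quintuples then shows that at most four pairs can be ``not bad'' for any basis, giving $\sum_{s\ne t}\prod_i g_i\le 4+N^2\cdot 0.99^{10^{-6}n}\le 5$. The union bound is over quintuples of pairs, not over bases, which is what makes the exponents work.
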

\begin{proof}
    Since the $\ket{\phi_x}$ form an $10^{-8}$-adaptive basis, there exists $S\subseteq[n]$ of size at most $10^{-8}n$ such that we can use the following expansion:
    \begin{align*}
        &\sum_{x\in\{0,1\}^n}\abs{\sum_{s\neq t}\braket{\phi_x}{\smash{\psi_{c^s}}}\braket{\smash{\psi_{c^t}}}{\phi_x}}\leq \sum_{x\in\{0,1\}^n}\sum_{s\neq t}\abs{\braket{\phi_x}{\smash{\psi_{c^s}}}\braket{\smash{\psi_{c^t}}}{\phi_x}}\\
        &\leq\sum_{s\neq t}\sum_{x\in\{0,1\}^n}\prod_{i\not\in S}\abs{\braket{\smash{\phi^i_{x_i}}}{\chi_{c^s_i}}\braket{\chi_{c^t_i}}{\smash{\phi^i_{x_i}}}}= \sum_{s\neq t}\prod_{i\not\in S}\pbra{\abs{\braket{\smash{\phi^i_{0}}}{\chi_{c^s_i}}\braket{\chi_{c^t_i}}{\smash{\phi^i_{0}}}}+\abs{\braket{\smash{\phi^i_{1}}}{\chi_{c^s_i}}\braket{\chi_{c^t_i}}{\smash{\phi^i_{1}}}}}.
    \end{align*}
    Notice that the term inside the right-hand side expression is always at most $1$ and sometimes less than $1$. Our goal will be to show that or all product bases $\phi$, for most pairs $(s,t)$, the product will be close to 0. 
    
    For any state one qubit basis $\phi'$, denote by $b(\phi')$ the element in $\{1,2,3\}$ such that the inequality in \Cref{claim:uncertainty one particle} holds (any such $b$ if there are multiple). Say a pair $(s,t)$, is \emph{bad} for an $\ell$-adaptive basis $\{\ket{\phi_x}\}$ (with corresponding set $S$) if for at least $10^{-6}n$ indices $i\in [n]\setminus S$, it holds that $\{c^s_i,c^t_i\} = \{0, b(\phi^i)\}$. Note that if the pair $(s,t)$ is bad for such a basis $\{\ket{\phi_x}\}$, then
    \[
        \prod_{i\not\in S}\pbra{\abs{\braket{\smash{\phi^i_{0}}}{\chi_{c^s_i}}\braket{\chi_{c^t_i}}{\smash{\phi^i_{0}}}}+\abs{\braket{\smash{\phi^i_{1}}}{\chi_{c^s_i}}\braket{\chi_{c^t_i}}{\smash{\phi^i_{1}}}}} \leq 0.99^{10^{-6}n}.
    \]
    Consider any quintuple of distinct pairs whose elements are in $[N]$ denoted by $\{s_1,t_1\}\ldots \{s_5,t_5\}$. We will show that with probability at least $1-2^{-10^{-5}n}$ at least one pair in the quintuple is bad for any $10^{-8}n$-adaptive basis.
    
    Once we have shown this, taking a union bound over all of the at most $N^{10}$ quintuples of pairs, the probability that $C$ is such that for every $10^{-8}n$-adaptive basis, at least one pair in every quintuple is bad for that basis is at least
    \[
        1-2^{-10^{-5}n}\cdot N^{10} \geq 1-2^{-10^{-6}n}.
    \]
    When this event holds, $C$ will be such that for any $10^{-8}n$-adaptive basis $\{\ket{\phi_x}\}$, at most four pairs $\{s_k,t_k\}$ are not bad for that basis, so
    \[
        \sum_{s\neq t}\prod_{i}\pbra{\abs{\braket{\smash{\phi^i_{0}}}{\chi_{c^s_i}}\braket{\chi_{c^t_i}}{\smash{\phi^i_{0}}}}+\abs{\braket{\smash{\phi^i_{1}}}{\chi_{c^s_i}}\braket{\chi_{c^t_i}}{\smash{\phi^i_{1}}}}} \leq 4+N^2\cdot 0.99^{-10^{-6}n} \leq 5
    \]
    when $n$ is large enough, as desired.
    
    It suffices to show that with probability at least $2^{-10^{-5}n}$, the $c^{s_1},c^{t_1},\dots,c^{t_5}$ are such that for any $10^{-8}n$-adaptive basis, at least one of the pairs $\{s_k,t_k\}$ is bad for that basis. The graph on $[N]$ that the pairs form either has a vertex of degree 3 or consists of disjoint cycles and edges. This means that for a fixed index $i\in[n]$, there is a labeling $(x_{s_1},x_{t_1},\dots,x_{s_5},x_{t_5})\in \{0,1,2,3\}^{10}$ such that if $c^{s_1}_i=x_{s_1},\dots,c^{t_5}_i=x_{t_5}$ then
    \begin{align*}
        \{\{c^{s_1}_i,c^{t_1}_i\},\dots,\{c^{s_5}_i,c^{s_5}_i\}\}=\{\{0,1\},\{0,2\},\{0,3\}\}
    \end{align*}
    If this occurs, then for any choice of $\phi^i$, at least one pair $\{s_k,t_k\}$ will satisfy $\{c^{s_k}_i,c^{t_k}_i\} = \{0, b(\phi^i)\}$. Moreover, the probability of this labeling occurring among the three chosen pairs is at least $4^{-6}$.

    Now let $w(\{s_1,t_1\},\dots \{s_5,t_5\})$ be the number of indices $i$ on which this happened. By a Chernoff bound, we have
    \begin{align*}
        \Pr\sbra{w(\{s_1,t_1\},\dots \{s_5,t_5\}) \leq \frac{4^{-6}n}{4}} \leq 2^{-10^{-5}n}.
    \end{align*}
    When this happens, there must be some pair $\{s_k,t_k\}$ where the condition holds for at least $4^{-6}n/20>10^{-5}n$ indices. Such a pair is bad for any $10^{-8}n$-adaptive basis, since $10^{-5}n-10^{-8}n\geq 10^{-6}n$. Thus, at least one of the five pairs is bad.
\end{proof}

\subsection{The Lower Bound}
We complete this section by proving our lower bound:
\begin{proof}[Proof of \Cref{prop:lower bound}]
    Let $C$ be the collection of size $N=2^{10^{-10}n}$ which exists by \Cref{lem:lower bound from concentration}. Any $10^{-8}n$-adaptive algorithm is one that measures each copy of a state in an $10^{-8}$-adaptive basis and makes decisions based on these measurement outcomes. By \Cref{lem:distinguishing probability operator diff meghal,lem:lower bound from concentration}, the states $\ketbra{\psi_C}{\psi_C}$ and $\rho_C$ yield outcome distributions that have TV distance $\leq 2^{-\Omega(n)}$ in any $10^{-8}$-adaptive basis. By a standard coupling argument, this shows that if $\mathfrak{D}_1$ and $\mathfrak{D}_2$ are the distributions corresponding to measuring $2^{o(n)}$ copies of $\ketbra{\psi_C}{\psi_C}$ and $\rho_C$, respectively, in a sequence of potentially adaptive product bases, then $d_{\mathrm{TV}}(\mathfrak{D}_1,\mathfrak{D}_2)\leq 2^{-\omega(n)}$.
    
    On the other hand, a similar calculation as that in the proof of \Cref{lem:almost state} we have that $\bra{\psi_C}\rho_{C}\ket{\psi_C}\leq 2^{-\Omega(n)}$, so this yields a lower bound for certifying $\ket{\psi_C}$.
\end{proof}

\section*{Acknowledgments}
We thank Mihir Singhal for helpful discussions, Omar Alrabiah for checking calculations, Mingyu Sun for pointing out errors in an earlier version of the paper, and Sisi Zhou for the reference~\cite{zhou2020saturating}. We also thank ChatGPT and Gemini for assistance. MG and WH are grateful to Angelos Pelecanos for his daily presence in Soda Hall.

\bibliography{references}
\bibliographystyle{alpha}

\end{document}